\titleformat{\subsubsection}[runin]{\normalfont\bfseries}{\thesubsubsection}{1em}{}
\newtheorem{theorem}{Theorem}[section]
\newtheorem{definition}{Definition}
\newtheorem{lemma}{Lemma}
\newtheorem{remark}{Remark}
\newtheorem{example}{Example}
\newtheorem{corollary}[theorem]{Corollary}
\newtheorem{proposition}[theorem]{Proposition}  
\renewcommand{\qed}{\hfill $\blacksquare$}
\newcounter{para}[section]
\renewcommand{\d}{\operatorname{d}}
\newcommand{\ad}{\operatorname{ad}}
\newcommand{\oprocendsymbol}{\hbox{$\bullet$}}
\newcommand{\oprocend}{\relax\ifmmode\else\unskip\hfill\fi\oprocendsymbol}
\newcommand{\g}{\mathfrak{g}}
\newcommand{\R}{\mathbb{R}}
\newcommand{\1}{{\bf 1}}
\newcommand{\A}{\mathbb{A}}
\newcommand{\bbL}{\mathbb{L}}
\newcommand{\K}{{\rm K}}
\renewcommand{\cal}{\mathcal}
\newcommand{\dep}{\operatorname{dep}}
\renewcommand{\sl}{\mathfrak{sl}}
\newcommand{\tr}{\operatorname{tr}}
\renewcommand{\span}{\operatorname{Span}}
\newcommand{\rp}{{\rm p}}
\definecolor{BBlue}{cmyk}{.98,0.10,0,.25}
\begin{document}

\title{\Large Controllability of Continuum Ensemble of Formation Systems over Directed Graphs}
\date{}
\maketitle

\vspace{-2cm}
\begin{flushright}
{\small {\bf Xudong Chen\footnote[1]{X. Chen is with the ECEE Dept., CU Boulder. {\em Email: xudong.chen@colorado.edu}.} 
}}
\end{flushright}

\begin{abstract}
	We propose in the paper a novel framework for using a common control input to simultaneously steer an infinite ensemble of networked control systems. We address the problem of co-designing information flow topology and network dynamics of every individual networked system so that a continuum ensemble of such systems is controllable. To keep the analysis tractable, we focus in the paper on a special class of ensembles systems, namely ensembles of multi-agent formation systems. Specifically, we consider an ensemble of formation systems indexed by a parameter in a compact, real analytic manifold. Every individual formation system in the ensemble is composed of $N$ agents. These agents evolve in $\mathbb{R}^n$ and can access relative positions of their neighbors. The information flow  topology within every individual formation system is, by convention, described by a directed graph where the vertices correspond to the $N$ agents and the directed edges indicate the information flow. For simplicity, we assume in the paper that all the individual formation systems share the same information flow  topology given by a common digraph $G$. Amongst other things, we establish a sufficient condition for approximate path-controllability of the continuum ensemble of formation systems. We show that if the digraph $G$ is strongly connected and the number $N$ of agents in each individual system is great than $(n + 1)$, then every such system in the ensemble is simultaneously approximately path-controllable over a path-connected, open dense subset.
	\end{abstract}

\section{Introduction}\label{sec:intro}

Ensemble control deals with the problem of using a single control input to simultaneously steer a large population  (and in the limit, a continuum) of dynamical systems. Consider a general ensemble of dynamical systems indexed by a parameter $\sigma$ of a certain  parameterization space $\Sigma$, which   can be either finite, countably infinite, or a locally Euclidean space.  We call an individual dynamical system in the ensemble system-$\sigma$, for $\sigma\in \Sigma$,  if it is associated with the parameter~$\sigma$. Denote by $x_\sigma(t) \in \R^l$ the state of system-$\sigma$ at time~$t$. Then, in its most general form, the dynamics of an ensemble control system can be described by the following differential equation:     
$$
\dot x_\sigma(t):= \frac{\partial}{\partial t} x_\sigma(t) = f(x_\sigma(t), \sigma, u(t)), \quad \sigma\in \Sigma,
$$  
where $u(t)\in \R^m$ is a common control input that applies to every individual system in the ensemble.

Controllability of an ensemble control system is, roughly speaking,  the ability of using the common control input $u(t)$ to simultaneously steer every individual system from any initial condition $x_\sigma(0)$ to any final condition $x_\sigma(T)$ at any given time~$T$. A precise definition of ensemble controllability will be given in Section~\ref{sec:Main}.  
Ensemble control originated from physics (e.g., NMR spectroscopy~\cite{ernst1987principles,glaser1998unitary}), and naturally has many applications across various disciplines of science and engineering. These applications include: \textit{(i)} Spintronics for spin-logic computation~\cite{behin2010proposal,appelbaum2016spin}, \textit{(ii)} smart materials that can respond to external stimuli such as light~\cite{yu2003photomechanics} and heat~\cite{taniguchi2018walking}, and \textit{(iii)}
control of neuron activities and brain dynamics~\cite{hammond2007pathological,li2013control,jadhav2016coordinated,mardinly2018precise}, just to name a few. 

{\em Ensemble of networked control systems.} We propose in the paper a novel framework that applies the idea of ensemble control to large scale multi-agent systems.  
The question of how to control a multi-agent system is not new. 
Existing approaches to the question often rely on the use of local interactions (e.g., communication and sensing) among agents, which turn the controllability problem into a problem of designing the underlying network topology that governs the information flow among the agents~\cite{baillieul2003information,rahmani2009controllability,liu2011controllability,sundaram2013structural,chen2017controllability}. But a larger networked system tends to be more fragile, less flexible, and less scalable; indeed, adding new agents into or removing agents out of the system changes its network topology, and can cause the entire system to lose controllability. Diagnosis and remediation can be very complicated especially when the system size is large. 

The ensemble control framework which we propose below provides an alternative method for controlling large scale multi-agent systems. We will consider an {\em extreme scenario} where an ensemble system is composed of {\em infinitely many} agents which are loosely connected ---- loose in the sense that the agents in the ensemble form relatively small networks and these networks do not necessarily have to interact with each other (in order that the ensemble system is controllable). 
Because every multi-agent system that is composed of a finite number of such small networks can be viewed as a proper subsystem of the  infinite ensemble, they constitute as special cases of the extreme scenario. Controllability of the infinite ensemble system will guarantee the controllability of any finite subsystem of it. As a consequence, any finite ensemble of networked systems will be flexible and scalable; adding new (or removing existing) individual networks has no impact on controllability of the others. 

To this end, we consider a continuum ensemble of dynamical systems where every  individual system is itself a networked control system composed of finitely many agents. For ease of presentation, we assume in the sequel that every individual system has the same number of agents, which we denote by~$N$.  Let $x_{\sigma,i}(t)$ be the state of agent~$i$ at time~$t$ within the individual networked system-$\sigma$, and~$h_{\sigma,i}(t)$ be the local information accessible to the agent~$i$. The collection $\{h_{\sigma,i}(t)\}^N_{i = 1}$ thus completely determines the information flow within system-$\sigma$ at time~$t$. Then, in its most general form, the dynamics of an ensemble of networked systems, compliant with the information flows, can be described by the following differential equation:          
$$
	\dot x_{\sigma,i}(t)= f_i(h_{\sigma, i}(t), \sigma, u(t)), \quad 1 \le i \le N  \mbox{ and }  \sigma\in \Sigma. 
$$
We address in the paper the controllability issue of the ensemble of networked systems:   
How to co-design the information flow and the dynamics for every individual networked system so that an ensemble of such systems is controllable?    

{\em Ensemble formation system.} To keep the above co-design problem tractable, we focus in the paper on a special class of ensemble systems, namely ensembles of multi-agent formation systems. The class of ensemble formation systems investigated here can be viewed as a prototype for the study of design and control of other general ensembles of networked control systems. We describe below in details the model of an ensemble formation system considered in the paper.  

Let the parameterization space $\Sigma$ be a compact, real analytic manifold. Each individual system-$\sigma$ in the ensemble is composed of~$N$ agents all of which evolve in an $n$-dimensional Euclidean space~$\R^n$. 

The information flow within an individual formation system is, by convention, described by a digraph. 
Specifically, let $G_\sigma = (V, E_\sigma)$ be a digraph with $V = \{v_i\}^N_{i = 1}$ the vertex set and $E_\sigma$ the edge set;  
if $v_iv_j$ is an edge of $G_\sigma$ (from~$v_i$ to~$v_j$), then agent~$i$ of system-$\sigma$ can access the relative position $(x_{\sigma, j}(t) - x_{\sigma,i}(t))$ between agent~$j$ of the same individual system and itself. 

For simplicity, we assume that the information flows of all individual formation systems are described by a {\em common} digraph $G = (V, E)$, i.e., $G_\sigma = G$ for all $\sigma\in \Sigma$.  
For a given vertex~$v_i$, we let $V^-_i:= \{v_j \mid v_iv_j \in E\}$ be the set of outgoing neighbors of~$v_i$, then the local information accessible to the agent~$i$ in system-$\sigma$ is given by $$h_{\sigma,i}(t) = \{x_{\sigma,j}(t) - x_{\sigma,i}(t) \mid v_j \in V^-_i\}.$$ 

To keep the analysis tractable, we assume that the dynamics $f_i(h_{\sigma,t}(t), \sigma, u(t))$ is separable in local information $h_{\sigma, i}(t)$,  parameter~$\sigma$, and common control input~$u(t)$. Specifically, we consider the following control model for each agent~$i$ in system-$\sigma$:   
\begin{equation}\label{eq:startensemble}
	\dot x_{\sigma,i}(t) = \sum_{v_j\in V^-_i}  \sum^r_{s = 1}u_{ij, s}(t) \rho_s(\sigma) (x_{\sigma,j}(t) - x_{\sigma,i}(t)),     
\end{equation}
where each $\rho_s: \Sigma \to \R$, for $1\le s \le r$, is a real analytic function and each $u_{ij, s}:[0,T] \to \R$, for $1\le s \le r$ and $v_i v_j \in E$,  is an integrable function over any finite time interval $[0,T]$.

We call each $\rho_s$ a {\bf parameterization function}. These parameterization functions describe the way in which individual formation systems in the ensemble differ from each other, and can be thought as the diversity of the individual systems in the ensemble. We will see that such a diversity is necessary for an ensemble system to be controllable.     

For ease of notation, we let $u(t)\in \R^{r|E|}$ be the collection of all the scalar control inputs $u_{ij, s}(t)$. 
We note again that the same control input $u(t)$ applies to every individual formation system in the ensemble.  We call system~\eqref{eq:startensemble} an {\bf ensemble formation system}, and refer to Fig.~\ref{fig:ensembleformationsys} for an illustration.

\begin{figure}[h]
\begin{center}
	\includegraphics[width = 0.6\textwidth]{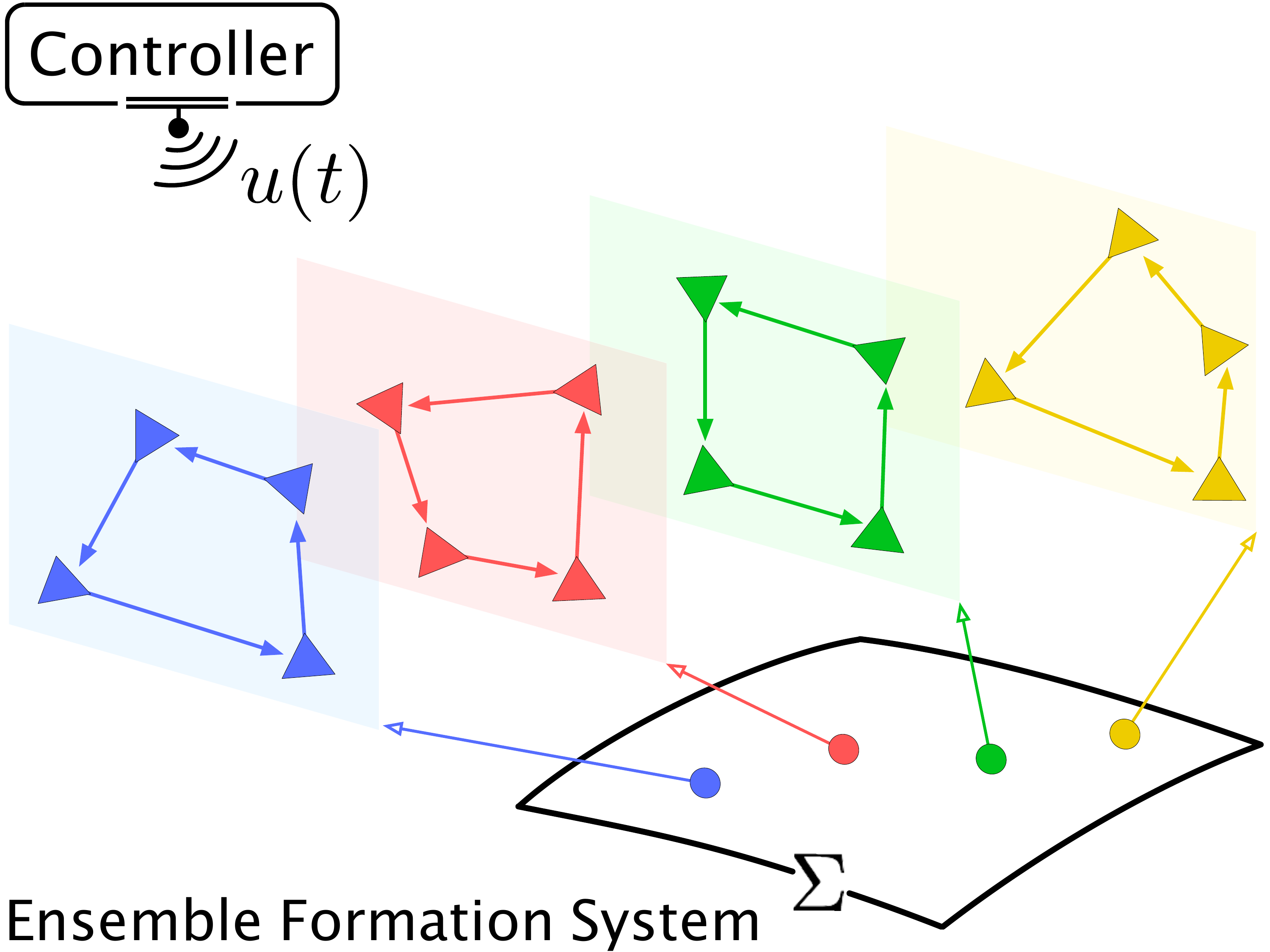}
	\caption{A continuum ensemble of formation systems is indexed by a parameter~$\sigma$ in a two dimensional surface~$\Sigma$. Each individual formation system is composed of 4 agents  evolving in $\R^2$. A cycle digraph, common to all, describes the information flow $(x_{\sigma,j}(t) - x_{\sigma, i}(t))$ within every individual formation system. A common control input $u(t)$, composed of $4$ scalar signals each of which corresponds to an edge of the cycle digraph, applies to all individual formation systems in the ensemble. The controllability result (Theorem~\ref{thm:main}) implies that such an ensemble formation system is approximately path-controllable.}\label{fig:ensembleformationsys} 
\end{center}	
\end{figure}

{\em Outline of contribution.} We establish in the paper a sufficient condition for the ensemble formation system~\eqref{eq:startensemble} to be approximately path-controllable. Roughly speaking, this is about the capability of using a common control input to simultaneously steer every individual formation system in the ensemble to approximate any desired trajectory of formations. Note, in particular, that trajectories of different individual systems can be completely different. We refer to Def.~\ref{def:ensemblecontrollability} for a precise definition, to Fig.~\ref{fig:ensemblecontrol} for an illustration, and to Theorem~\ref{thm:main} for the controllability result.  
The theorem addresses the interplay between the information flows within the individual formation systems (i.e., the common digraph~$G$), the parameterization functions $\{\rho_s\}^r_{s = 1}$, and the controllability of the ensemble formation system.

A key component of the analysis of the ensemble formation system involves computing the iterated Lie brackets of control vector fields of system~\eqref{eq:startensemble}, which further boils down to the computation of iterated matrix commutators of certain sparse zero-row-sum matrices. We provide in Section~\ref{sec:liealgebraofzrs} an in-depth analysis of such matrix commutators.   
Specifically, 
for an edge $v_iv_j$ of the digraph $G$, we let $A_{ij}$ be an $N\times N $ matrix with~$1$ on the $ij$th entry, $-1$ on the $ii$th entry, and~$0$ elsewhere.  We investigate the following iterated matrix commutators (or Lie products) of depth~$k$:
 $$
 [A_{i_0j_0}, [A_{i_1j_1}, \cdots [A_{i_{k - 1}j_{k - 1}}, A_{i_kj_k}]]], 
 $$
 where $v_{i_0}v_{j_0},\ldots, v_{i_k}v_{j_k}$ are edges of the given digraph~$G$. 
 
Amongst other things, we show that if $G$ is strongly connected, then, for any sufficiently large~$k$ ($k \ge \d(G)$ with $\d(G)$ the diameter of $G$), the vector space spanned by the above Lie products of depth~$k$ will be stabilized, given by the vector space of all zero-row-sum matrices with zero-trace.  Moreover, we show that there exists a family of spanning sets, termed semi-codistinguished sets (introduced in Definition~\ref{def:codistinguished}), of such a vector space---where each spanning set corresponds to a strongly connected digraph $G$---such that for any $k\ge \d(G)$, the matrices in any one of the spanning sets can be obtained by evaluating certain Lie products of the $A_{ij}$'s of the given depth~$k$. (see Theorem~\ref{thm:familycodistset} for details). 

The above results about iterated commutators of zero-row-sum matrices are instrumental in establishing approximate path-controllability of an ensemble formation system, or more generally, an ensemble of networked control systems whose dynamics are governed by zero-row-sum matrices (e.g., an ensemble of continuous-time Markov chains). Those results might also be of independent interest in the study of stochastic Lie algebra (i.e., the Lie algebra of zero-row-sum matrices~\cite{boukas2015structure,guerra2018stochastic}).

The ensemble controllability result as well as the analysis of stochastic Lie algebra carried out in the paper significantly extends the result and analysis in~\cite{chen2017controllability} where we established approximate path-controllability of a {\em single} formation system (i.e., for the case where $\Sigma$ is a singleton):      
\begin{equation}\label{eq:startingsystem}
\dot x_i(t) = \sum_{v_j\in V^-_i}  u_{ij}(t)(x_j(t)-x_i(t)), \quad 1\le i \le N. 
\end{equation}  
There, we have also computed the Lie brackets of control vector fields of system~\eqref{eq:startingsystem} and verified that~\eqref{eq:startingsystem} meets the Lie algebra rank condition under some mild assumption. However, the controllability of a {\em single} control-affine system is far from sufficient for a (continuum) ensemble of such systems to be controllable (regardless of what parameterization functions are used). In fact, a necessary condition for ensemble controllability of~\eqref{eq:startensemble} is such that the Lie algebra generated by the control vector fields {\em cannot} be nilpotent.  

On the stochastic Lie algebra level, a key difference between this paper and~\cite{chen2017controllability} is the following: In~\cite{chen2017controllability}, we computed the vector space spanned by Lie products of the $A_{ij}$'s of {\em all depths} while in this paper, we compute an infinite sequence of vector spaces spanned by Lie products of the $A_{ij}$'s of {\em any given depth}~$k$ for $k \ge \d(G)$.

{\em Literature review on ensemble controllability.} 
The controllability issue of a continuum ensemble of control-affine systems has recently been addressed in~\cite{agrachev2016ensemble}. The authors established an ensemble version of the Rachevsky-Chow theorem for systems of the type:  
$$\dot x_\sigma(t) = \sum^m_{i = 1}u_i(t)f_i(x_\sigma(t),  \sigma), \quad   \sigma\in \Sigma,$$ where the state $x_\sigma(t)$ of each individual system-$\sigma$ belongs to a real analytic manifold~$M$. The ensemble version of the Rachevsky-Chow theorem can be used as a sufficient condition for the above ensemble system to be approximately controllable. We briefly review such a condition below. First, recall that a Lie bracket $[f_i, f_j]$ of two vector fields $f_i$ and $f_j$ over~$M$ is defined such that for any smooth function~$\phi$ on~$M$, $[f_i, f_j]\phi = f_if_j\phi - f_jf_i\phi$ where  $f\phi$ is the Lie derivative of~$\phi$ along a vector field~$f$. For the case where $M$ is an Euclidean space, then $[f_i, f_j]$ can be simply defined by $$[f_i, f_j] = \frac{\partial f_j} {\partial x} f_i - \frac{\partial f_i} {\partial x} f_j.$$    
Then, the ensemble version of the Rachevsky-Chow theorem established in~\cite{agrachev2016ensemble} requires that for all $x\in M$, the span of the following Lie products of control vector fields: 
$$[f_{\alpha_0}, [f_{\alpha_1}, \cdots [f_{\alpha_{k - 1}}, f_{\alpha_k}]]],$$ of all depths, when evaluated at~$x$, be dense in ${\rm L}^1(\Sigma, T_xM)$, where $T_xM$ is the tangent space of~$M$ at~$x$ and ${\rm L}^1(\Sigma, T_xM)$ is the Banach space of all integrable functions $\phi: \Sigma \to T_x M$, i.e., $\int_\Sigma \|\phi(\sigma)\|d\sigma < \infty$. 

We next mention~\cite{chen2018structure} in which a special class of control-affine ensemble systems was investigated. Specifically, the dynamics of each system-$\sigma$ investigated there is separable in the state~$x_\sigma(t)$, the control~$u(t)$, and the parameter~$\sigma$: 
\begin{equation}\label{eq:review}
\dot x_\sigma(t) = \sum^m_{i = 1}\sum^r_{s = 1} u_{i,s}(t) \rho_s(\sigma) f_i(x_\sigma(t)), \quad  \sigma\in \Sigma. 
\end{equation} 
Moreover, the set of control vector fields $\{f_i\}^m_{i = 1}$ satisfies the following three conditions: {\em (i)} For all $x\in M$, $\{f_i(x)\}^m_{i = 1}$ spans $T_xM$;  {\em (ii)} For any two vector fields $f_i, f_j$, there exist $f_k$ and a constant $\lambda\in \R$ such that $[f_i, f_j] = \lambda f_k$, and conversely; {\em (iii)} For any $f_k$, there exist $f_i$, $f_j$, and a nonzero $\lambda$ such that $[f_i, f_j] = \lambda f_k$. We call any such $\{f_i\}^m_{i = 1}$ a {\em distinguished set of vector fields}, and system~\eqref{eq:review} a {\em distinguished ensemble system}. It was shown that the ensemble version of the Rachevsky-Chow theorem established in~\cite{agrachev2016ensemble} holds for a distinguished ensemble system (provided that a mild assumption on the parameterization functions is satisfied). We further refer to~\cite{JSL-NK:09} for ensemble control of Bloch equations as a motivating example of a distinguished ensemble system.           
  
We note here that the ensemble formation system~\eqref{eq:startensemble} is of type~\eqref{eq:review}, i.e., the dynamics is separable in the state,  the parameter, and the control input. But, system~\eqref{eq:startensemble} is {\em not} distinguished. Specifically, we will see that the set of control vector fields of the ensemble formation system~\eqref{eq:startensemble} satisfies conditions {\em (i)} and {\em (iii)}, but not~{\em (ii)}. Correspondingly, we will modify the arguments used in~\cite{chen2018structure} and establish ensemble controllability of system~\eqref{eq:startensemble}.

 {\em Organization of the paper.}  We introduce in Section~\ref{sec:definitions} preliminaries, key definitions and notations. We state in Section~\ref{sec:Main} the main result (Theorem~\ref{thm:main}) about controllability of the ensemble formation system~\eqref{eq:startensemble}. A sketch of the proof of the theorem will be given at the end of the section.  Next, in Section~\ref{sec:liealgebraofzrs}, we investigate the stochastic Lie algebra. We compute the iterated matrix commutators of the zero-row-sum matrices $A_{ij}$ of a given depth. The main result of the section we will establish is Theorem~\ref{thm:familycodistset}.     
 Then, in Section~\ref{sec:proofofmainresult}, we analyze the ensemble formation system and establish the controllability result. We provide conclusions at the end.             

\section{Definitions and notations}\label{sec:definitions}
We introduce here key definitions and notations.

{\em 1. Vector space.} Denote by $\{e_i\}^N_{i = 1}$ the standard basis of the Euclidean space $\R^N$, and denote by $\1\in \R^N$ the vector of all ones. 

For any vector $v$ in a Euclidean space, denote by $\|v\|$ the two-norm (i.e., the Euclidean norm). For an arbitrary matrix $X$, denote by $\|X\|$ the induced matrix two-norm.

For a subset $S$ of a vector space $V$, we denote by $\span(S)$ the subspace of~$V$ spanned by the elements in $S$. The {\bf negative} of $S$, denoted by $-S$, is composed of all vectors $-v$ for $v\in S$. We further let $\pm S$ be the union of $S$ and $-S$. 
 
For two subspaces $V'$ and $V''$ of $V$, we let $V' + V''$ be the subspace of $V$ spanned by all vectors $v' + v''$ for $v'\in V'$ and $v''\in V''$. 

{\em 2. Lie algebra.} Let $\g$ be a real Lie algebra, with $[\cdot, \cdot]$ the Lie bracket. For $\g'$ and $\g''$ two subspaces of $\g$, we let $[\g',\g'']$ be defined as the span of $[X', X'']$ for $X'\in \g'$ and $X''\in \g''$. We will use such a notation to define the commutator ideal $[\g, \g]$ of $\g$, and more generally, the lower central series $[\g,\cdots, [\g, \g]]$.  

However, if $S'$ and $S''$ are two {\em finite} subsets of $\g$, we let $[S',S'']$ be a {\em finite} subset of~$\g$ composed of all $[X', X'']$ for $X'\in \g'$ and $X''\in \g''$.       
So, for example, using the above notation, we have $$[\span(S'), \span(S'')] = \span([S', S'']).$$ 

Denote by $\ad$ the {\bf adjoint action}, i.e., for any $X'\in \g$, we let $\ad(X'): \g\to \g$ be defined as $\ad(X')(X''):= [X', X'']$ for any $X''\in \g$.  
For any two finite subsets $S'$ and $S''$ of $\g$, we let $\ad(S')(S'') := [S', S'']$. Next, we define via recursion a sequence of finite subsets of~$\g$ as follows: For $k = 0$, we define $\ad^0(S')(S'') := S''$; for $k \ge 1$, we define  
$$\ad^k(S')(S'') := [S', \ad^{k-1}(S')(S'')].$$ 
 Further, if $S' = S''$, then, for ease of notation, we simply write $\ad^k(S'):= \ad^k(S')(S')$. 

Let $S:= \{A_1,\ldots, A_k\}$ be an arbitrary set. Denote by $\mathfrak{L}(S)$ the free Lie algebra generated by the $A_i$'s treated as the free generators. For a Lie product $A\in \mathfrak{L}(S)$, let $\dep(A)$ be the {\bf depth} of~$A$ defined as the number of Lie brackets in~$A$. Equivalently, $\dep(A)$ is the number 
of~$A_i$'s in~$A$ (counted with multiplicity) minus one. For example, the depth of $[A_{i_1}, [A_{i_2}, A_{i_3}]]$ is~$2$. 
Let ${\cal S}\subset \mathfrak{L}(S)$ be the collection of Lie products. 
We further decompose ${\cal S} = \sqcup_{k \ge 0} {\cal S}(k)$ where each ${\cal S}(k)$ is composed of Lie products of depth~$k$. 

{\em 3. Directed graph.} 
Let $G = (V,E)$ be a directed graph (or simply, {\em digraph}) of~$N$ vertices with $V = \{v_i\}^N_{i= 1}$ the set of vertices and $E$ the set of edges. We assume in the paper that a digraph $G$ does {\em not} have any self-loop. 
Denote by $v_iv_j$ a directed edge of $G$ from~$v_i$ to~$v_j$. We call $v_j$ an {\bf out-neighbor} of~$v_i$, and denote by $V^-_i$ the set of out-neighbors of~$v_i$.  

Let $v_{i_0}\cdots v_{i_l}$ be a {\bf path} where each $v_{i_{p-1}}v_{i_p}$, for $p = 1,\ldots, l$, is an edge of $G$. Note that the vertices $\{v_{i_j}\}^{l - 1}_{j = 0}$ have to be pairwise distinct. If $v_{i_0} = v_{i_l}$, then the path is a {\bf cycle}. The {\bf length} of a path (cycle) is the number~$l$ of edges in it.
  
We call $G$ {\bf weakly connected} if the undirected graph obtained by ignoring the orientation of the edges is connected. 
The digraph $G$ is {\bf strongly connected} if for any pair of distinct vertices $v_i$ and $v_j$, there is a path from $v_i$ to $v_j$. 

The {\bf diameter} of a strongly connected digraph~$G$, denoted by~$\d(G)$, is the smallest positive integer number such that the following hold: For any two vertices $v_i$ and $v_j$ (possibly the same),  there exists a path of length~$l$ with $l \le \d(G)$ from~$v_i$ to~$v_j$. For example, the diameter of a cycle digraph of~$n$ vertices is~$n$. 

Given a subset $V'$ of $V$, a subgraph $G' = (V',E')$ is {\bf induced by} $V'$ if its edge set~$E'$ contains all edges in~$E$ that connect vertices in~$V'$, i.e., $E' := \{v_i v_j\in E \mid v_i, v_j \in V'\}$.

{\em 4. Algebra of functions.} Let $\Sigma$ be an arbitrary space. 
Given a real-valued function $\rho$ on $\Sigma$ and a nonnegative integer~$k$, we define $\rho^k$ as $\rho^k(\sigma):= \rho(\sigma)^k$ for all $\sigma \in \Sigma$. Note, in particular, that if $k = 0$, then $\rho^0$ is a constant function on $\Sigma$ whose value is~$1$ everywhere. 
We say that~$\rho$ is {\bf everywhere nonzero} if $\rho(\sigma) \neq 0$ for all~$\sigma\in \Sigma$. Note that for any such function, $\rho^{-1}$ is well defined, given by $\rho^{-1}(\sigma) := \rho(\sigma)^{-1}$ for all $\sigma\in \Sigma$. Similarly, we define $\rho^{-k}:= (\rho^{-1})^k$ for any~$k \ge 0$.     
	
Let $\{\rho_s\}^r_{s = 1}$ be a set of functions over $\Sigma$. We call a function $\prod^r_{s = 1}\rho^{k_s}_s$ for $k_s \ge 0$ a {\bf monomial}. The {\bf degree} of the monomial is given by $\sum^r_{s = 1}k_s$. Denote by ${\cal P}$ the collection of all monomials. We decompose ${\cal P}$ as ${\cal P} = \sqcup_{k \ge 0} {\cal P}(k)$, where ${\cal P}(k)$ is the collection of all monomials of degree~$k$. 

The set of functions $\{\rho_s\}^r_{s = 1}$ is said to {\bf separate points} if for any two distinct points $\sigma$ and $\sigma'$ in $\Sigma$, there exists a function $\rho_s$ out of the set such that $\rho_s(\sigma) \neq \rho_s(\sigma')$. 

{\em 5. Control system.} For a general control system $\dot x(t) = f(x(t), u(t))$,  we denote by $u[0,T]$ the control input~$u(t)$ over the time interval $[0,T]$, and  $x[0,T]$ the trajectory~$x(t)$ over the interval $[0,T]$ generated by the control input. 

\section{Controllability of formation systems}\label{sec:Main}
\subsection{Controllability of a single formation system}\label{ssec:resultsingle}
We review in the subsection the controllability result established in~\cite{chen2017controllability} for a {\em single} formation system.  
The formation control system considered there is composed of $N$ agents $x_1(t),\ldots, x_N(t)$ that evolve in $\R^n$.  We use, by convention, a directed graph $G = (V,E)$ to indicate the information flow among the~$N$ agents: If $v_iv_j$ is an edge of $G$, then agent~$i$ can access the relative position $(x_j(t) - x_i(t))$ between agents~$i$ and~$j$. We assume that the dynamics of each agent~$i$ at any time~$t$ is given by a certain linear combination of $(x_j(t) - x_i(t))$ for $v_j\in V^-_i$.   
Then, the way a controller steers an agent~$i$ is to manipulate the coefficients associated with the linear combination. Specifically, we have the following dynamics for each agent~$i$: 
\begin{equation}\label{eq:singleformationsys}
	\dot x_i(t) = \sum_{v_j\in V^-_i}  u_{ij}(t)(x_j(t)-x_i(t)), \quad 1\le i \le N, 
\end{equation}
where each $u_{ij}(t)$ is a scalar control input. Note that~\eqref{eq:singleformationsys} is a bilinear control system, i.e., the dynamics is linear in the state and the control input. 
We also note that the above control dynamics can be viewed as a variation of the classical diffusively-coupled dynamics $\dot x_i = \sum_{v_j}a_{ij} (x_j - x_i)$ for which one replaces the (positive) coefficients $a_{ij}$ with the control inputs $u_{ij}(t)$.

The dynamics of the above formation system can be written into a matrix form. For that, we need to introduce a few definitions and notations. We first have the following one:

\begin{definition}[Primary matrix]\label{def:primarymatrix}
For a digraph $G = (V, E)$ of $N$ vertices, we define for each edge $v_iv_j$ of~$G$ a zero-row-sum $N\times N$ matrix as follows:
\begin{equation}\label{eq:defAij}
	A_{ij}:= e_ie_j^\top - e_ie_i^\top ,
\end{equation}
i.e., $A_{ij}$ has~$1$ on the $ij$th entry, $-1$ on the $ii$th entry, and~$0$ elsewhere. We call any such matrix $A_{ij}$ a {\bf primary matrix}. 
\end{definition}

Next, for any given time~$t$, let $X(t) := [x^\top_1(t); \cdots;  x^\top_N(t)]$ be an $N\times n$ matrix, i.e., the $i$th row of~$X(t)$ is $x^\top_i(t)$. 
We call $X(t)$ a {\em configuration}, and denote by $P:= \R^{N \times n}$ the {\em configuration space}. 
With the above notations, we can re-write system~\eqref{eq:singleformationsys} into the following differential equation for the matrix $X(t)$:  
\begin{equation}\label{eq:rewrittensinglesystem}
\dot X(t) = \sum_{v_iv_j \in E}u_{ij}(t)A_{ij}X(t).
\end{equation}   
We have investigated in~\cite{chen2017controllability} approximate path-controllability of system~\eqref{eq:rewrittenensemblesystem}. Roughly speaking, 
a control system is approximately path-controllable if one is able to steer the system to approximate any target trajectory of states. A precise definition is given below.

\begin{definition}\label{def:singlecontrollability} 
Let $Q$ be an open, path-connected subset of~$P$. System~\eqref{eq:rewrittensinglesystem} is {\bf approximately path-controllable} over $Q$ if for any $T > 0$, any smooth trajectory $\hat X : [0,T] \to Q$, and any error tolerance $\epsilon > 0$, there are integrable functions $u_{ij}:[0,T]\to \R$, for $v_iv_j\in E$, as control inputs such that the trajectory $X[0,T]$ generated by~\eqref{eq:rewrittensinglesystem}, from an initial condition $X(0)$ with $X(0)\in Q$ and $\|X(0) - \hat X(0)\| < \epsilon$, satisfies $$\|X(t) - \hat X(t)\| < \epsilon, \quad \forall t \in  [0,T].$$ 
\end{definition}

 We illustrate the above definition in Fig.~\ref{fig:pathcontrollability}:

\begin{figure}[h]
\begin{center}
	\includegraphics[width = 0.5\textwidth]{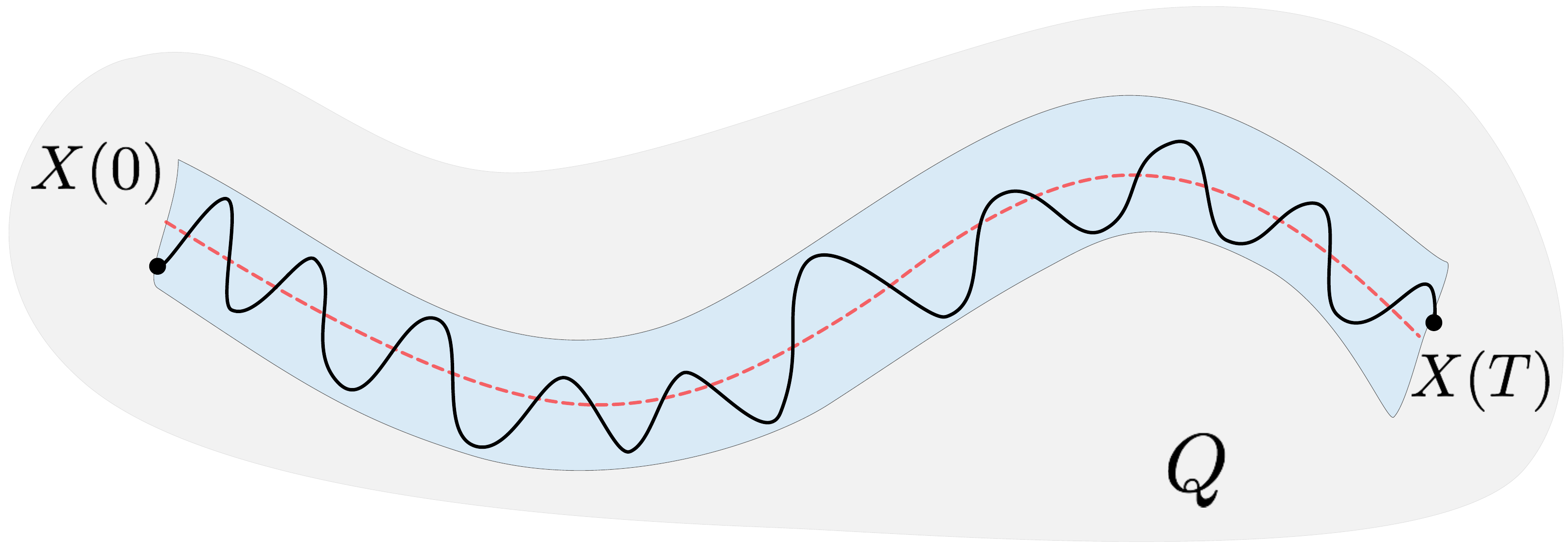}
	\caption{In the figure, the red dashed curve is a desired trajectory $\hat X[0,T]$ we want the system~\eqref{eq:rewrittensinglesystem} to follow. Each $X(t)$ belongs to a certain open, path connected subset~$Q$ of~$P$. The blue region is an  $\epsilon$-tubular neighborhood of this trajectory. The solid curve is generated by a control law $u[0,T]$ such that the solution $X[0,T]$ is within the $\epsilon$-tubular neighborhood.}\label{fig:pathcontrollability} 
\end{center}
\end{figure}

We state below the controllability result for system~\eqref{eq:rewrittensinglesystem}. To proceed, we first specify the open, path-connected set~$Q$ considered in~\cite{chen2017controllability}. We need the following definition: 

\begin{definition}\label{def:nondegenerate}
 A configuration $X = [x^\top_1;\cdots; x^\top_N]\in P$ is {\bf nondegenerate} in $\R^n$ if the span of $\{x_i- x_1,\ldots ,x_i - x_N\}$ is $\R^n$ for some (and hence, any) $i = 1,\ldots,N$. If $N = n + 1$, then~$X$ is an {\bf $n$-simplex}.   
\end{definition}

\begin{remark}\label{rmk:nondegenerate}{\em 
A configuration~$X$ is nondegenerate if and only if there does {\em not} exist a proper subspace of $\R^n$ that contains all the $x_i$'s.  For example, a line configuration is degenerate in~$\R^2$ and a planar configuration is degenerate in~$\R^3$.   
Note that if a configuration~$X$ is nondegenerate, then there exists a subset of $(n + 1)$ agents such that the sub-configuration formed by these $(n + 1)$ agents is an $n$-simplex (a nondegenerate triangle  for $n = 2$ or a nondegenerate tetrahedron for $n = 3$).} 
\end{remark} 

Now, let $Q$ be the collection of all  nondegenerate configurations in~$P$:
\begin{equation}\label{eq:defsetQ}
Q := \{X\in P \mid X \mbox{ is nondegenerate}\}.
\end{equation} 
We have the following fact:

\begin{lemma}\label{lem:resultforsetQ}
If $N > (n + 1)$, then $Q$ is a path-connected, open dense subset of $P$. Moreover, if the underlying digraph $G$ is strongly connected, then system~\eqref{eq:rewrittensinglesystem} is approximately path-controllable over the set~$Q$.   
\end{lemma}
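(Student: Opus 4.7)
The plan is to split the lemma into the topological statement about $Q$ and the dynamical statement about approximate path-controllability, and to handle them in that order.

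For the topological claim, I would first observe that nondegeneracy is a polynomial open condition: $X\in Q$ iff at least one of the $(n+1)\times(n+1)$ minors of the $N\times(n+1)$ matrix $[X,\mathbf{1}]$ is nonzero. This gives openness and, by the fact that a nonempty Zariski open set is dense in a real vector space, density. For path-connectedness under $N>n+1$, I would compute the codimension of the degenerate locus $P\setminus Q$. Configurations all of whose rows lie in a common affine hyperplane of $\R^n$ form a set of dimension $n + N(n-1)$ (choice of hyperplane plus choice of $N$ points in it), so the codimension is $Nn-(n+N(n-1))=N-n\ge 2$. Removing a real-analytic subset of codimension at least two from a connected real-analytic manifold leaves a path-connected set, and this gives the first part of the lemma.

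For the dynamical claim I would invoke the analytic version of the Chow--Rashevsky theorem: since system~\eqref{eq:rewrittensinglesystem} is driftless, real analytic, and control-symmetric (the controls $u_{ij}$ are unconstrained in sign), it suffices to verify the Lie algebra rank condition at every $X\in Q$. The bracket of two control vector fields $X\mapsto A_{ij}X$ and $X\mapsto A_{kl}X$ is computed as $X\mapsto -[A_{ij},A_{kl}]X$, so the evaluation at $X$ of the Lie algebra generated by the control vector fields equals $\{MX\mid M\in\mathcal{A}_G\}$, where $\mathcal{A}_G$ is the matrix Lie algebra generated by the primary matrices $\{A_{ij}:v_iv_j\in E\}$. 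I would then use the stochastic Lie algebra analysis (this is where I would cite the results of Section~\ref{sec:liealgebraofzrs}) to conclude that when $G$ is strongly connected, $\mathcal{A}_G$ equals the full $N(N-1)$-dimensional space $\mathfrak{z}$ of zero-row-sum $N\times N$ matrices.

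It remains to show that the linear map $\Phi_X:\mathfrak{z}\to\R^{N\times n}$, $M\mapsto MX$, is surjective whenever $X$ is nondegenerate. The kernel of $\Phi_X$ consists of those $M\in\mathfrak{z}$ whose kernel contains the column span of $X$; since $M\mathbf{1}=0$ automatically, $\ker\Phi_X$ consists of $M$ killing $\operatorname{col}(X)+\span(\mathbf{1})$. Nondegeneracy of $X$ is equivalent to $\operatorname{rank}[X,\mathbf{1}]=n+1$, so this subspace has dimension exactly $n+1$, giving $\dim\ker\Phi_X=N(N-n-1)$ and hence $\dim\operatorname{im}\Phi_X=N(N-1)-N(N-n-1)=Nn=\dim T_XP$. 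Approximate path-controllability over the path-connected set $Q$ then follows from LARC together with the standard tracking argument for analytic driftless systems: time-reparametrization plus iterated bracket approximations (the Baker--Campbell--Hausdorff/Sussmann construction) let us track any smooth curve in $Q$ uniformly to within $\eps$.

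The main obstacle is the Lie algebra computation $\mathcal{A}_G=\mathfrak{z}$, which is a combinatorial statement tying the strong connectivity of $G$ to the span of nested commutators of the sparse matrices $A_{ij}$. The topological and linear-algebra pieces are essentially bookkeeping once this identification is available; the tracking step is standard. I would expect the bulk of the work in [chen2017controllability] (and its extension in Section~\ref{sec:liealgebraofzrs} of the present paper) to be devoted precisely to this bracket-generation result.
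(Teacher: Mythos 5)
Your argument is correct, but note that the paper does not actually prove Lemma~\ref{lem:resultforsetQ} here: it defers entirely to~\cite{chen2017controllability}, so the comparison is against that reference and against the parallel ensemble machinery in Sections~\ref{sec:liealgebraofzrs} and~\ref{sec:proofofmainresult}. Your overall strategy coincides with theirs --- reduce brackets of the linear vector fields $X\mapsto A_{ij}X$ to matrix commutators of the $A_{ij}$'s, show the generated matrix Lie algebra is large enough when $G$ is strongly connected, check that $M\mapsto MX$ is onto $\R^{N\times n}$ at nondegenerate $X$, and finish with a Liu--Sussmann-type tracking argument --- but two steps are handled differently. First, you establish surjectivity of $\Phi_X$ by a rank--nullity count ($\ker\Phi_X$ is the set of matrices annihilating the $(n+1)$-dimensional space $\operatorname{col}(X)+\span(\1)$, of dimension $N(N-n-1)$), whereas the paper's route (see the remark following Lemma~\ref{lem:simplecase} and the proof of Prop.~\ref{prop:fullrankLX}) constructs an explicit set of $nN$ linearly independent matrices of the form $A_{ij}X$; your count is shorter and equally rigorous, though note it only needs $N>n$, consistent with Lemma~\ref{lem:simplecase}. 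Second, for a \emph{single} system only the union of brackets over all depths matters, so you can get away with the coarse statement that the generated Lie algebra is the full $N(N-1)$-dimensional zero-row-sum algebra $\A$ (which follows since the brackets span $\A^*$ and any $A_{ij}$ has nonzero trace); the fixed-depth refinement $S^*_G\subseteq\ad^m(S_G)$ for $m\ge\d(G)$ of Theorem~\ref{thm:familycodistset} is genuinely needed only for the ensemble version, and citing it here is harmless overkill. Your codimension-$(N-n)\ge 2$ argument for path-connectedness and the minor-based openness/density argument are standard and match what the cited reference must do. No gaps.
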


A complete proof of the above result can be found in~\cite{chen2017controllability} where we have established the controllability result for a broader class of (weakly connected) digraphs.

\subsection{Controllability of an ensemble formation system}\label{ssec:resultensemble}\label{ssec:ensemble}
We now return to the ensemble formation system introduced in Section~\ref{sec:intro}. We will state in the subsection the path-controllability result for the ensemble system which straightforwardly generalizes Lemma~\ref{lem:resultforsetQ}.

Recall that if an individual formation system is indexed by $\sigma\in \Sigma$, then we call it system-$\sigma$. Each individual formation system is composed of $N$ agents in $\R^n$. We reproduce below the dynamics of agent~$i$ associated with system-$\sigma$:
\begin{equation}\label{eq:ensembleformationsys}
	\dot x_{\sigma,i}(t) = \sum_{v_j\in V^-_i}  \sum^r_{s = 1}u_{ij, s}(t) \rho_s(\sigma) (x_{\sigma,j}(t) - x_{\sigma,i}(t)). 
\end{equation}
We note again that the control inputs $u_{ij,s}(t)$ are the same for every individual formation system. 
Similarly, one can re-write the above dynamics into a matrix form as we did in the previous subsection: For any given time~$t$, we let $X_\sigma(t)$ be an $N\times n$ matrix defined as follows:
$$X_\sigma(t): = [x^\top_{\sigma,1}(t); \cdots; x^\top_{\sigma, N}(t)].$$
We call $X_\sigma(t)$ a configuration of system-$\sigma$.  
Then, by~\eqref{eq:ensembleformationsys}, we have the following differential equation for $X_\sigma(t)$: 
\begin{equation}\label{eq:rewrittenensemblesystem}
\dot X_\sigma(t) = \sum_{v_iv_j \in E}\sum^r_{s = 1} u_{ij, s}(t)\rho_s(\sigma)A_{ij}X_{\sigma}(t), \quad  \sigma\in \Sigma,  
\end{equation}
where each $A_{ij}$ is a primary matrix introduced in Def.~\ref{def:primarymatrix}.

We will now generalize approximate path-controllability for a single formation system (Def.~\ref{def:singlecontrollability}) to the ensemble case. Roughly speaking, an ensemble system is said to be approximately path-controllable if one is able to use to common control inputs to steer simultaneously every individual system to approximate any given target trajectory (different individual systems can have different target trajectories). We make the statement precise below.     
Let $$X_\Sigma(t): = \{X_\sigma(t)\mid \sigma\in \Sigma\}$$ 
be the collection of configurations at time~$t$. We call $X_\Sigma(t)$ a
{\bf profile} of the ensemble formation system. For a fix time~$t$,  we say that $X_\Sigma(t)$ is smooth if the map $\sigma \in \Sigma\mapsto X_\sigma(t) \in P$ is smooth. Further, let $X_\Sigma[0,T]:= \{X_\sigma[0,T]\mid \sigma\in \Sigma\}$ be the trajectory of $X_\Sigma(t)$ over~$[0,T]$. Similarly, we say that $X_\Sigma[0,T]$ is smooth if the map $$(t, \sigma)\in [0,T]\times \Sigma \mapsto X_\sigma(t)\in P$$ is smooth.   
We now introduce the definition about approximate path-controllability for an ensemble formation system:

\begin{definition}\label{def:ensemblecontrollability}
Let $Q$ be an open, path-connected subset of $P$. System~\eqref{eq:rewrittenensemblesystem} is {\bf approximately ensemble path-controllable} over~$Q$ if for any smooth target trajectory $\hat X_\Sigma[0,T]$, with  $\hat X_\sigma(t)\in Q$ for all $(t,\sigma)\in  [0,T]\times \Sigma$, and any error tolerance~$\epsilon$, there are integrable functions $u_{ij, s}: [0,T]\to \R$ as control inputs such that the trajectory $X_\Sigma[0,T]$ generated by~\eqref{eq:rewrittenensemblesystem}, from an initial condition~$X_\Sigma(0)$ with $X_\sigma(0)\in Q$ and $\|X_\sigma(0) - \hat  X_\sigma(0)\|< \epsilon$ for all $\sigma\in \Sigma$, satisfies
$$
\|X_\sigma(t)- \hat X_\sigma(t) \| < \epsilon,  \quad \forall (t,\sigma)\in [0,T]\times \Sigma.$$ 
\end{definition}

We illustrate the above definition in  Fig.~\ref{fig:ensemblecontrol}:

\begin{figure}[h]
\begin{center}
	\includegraphics[width = 0.5\textwidth]{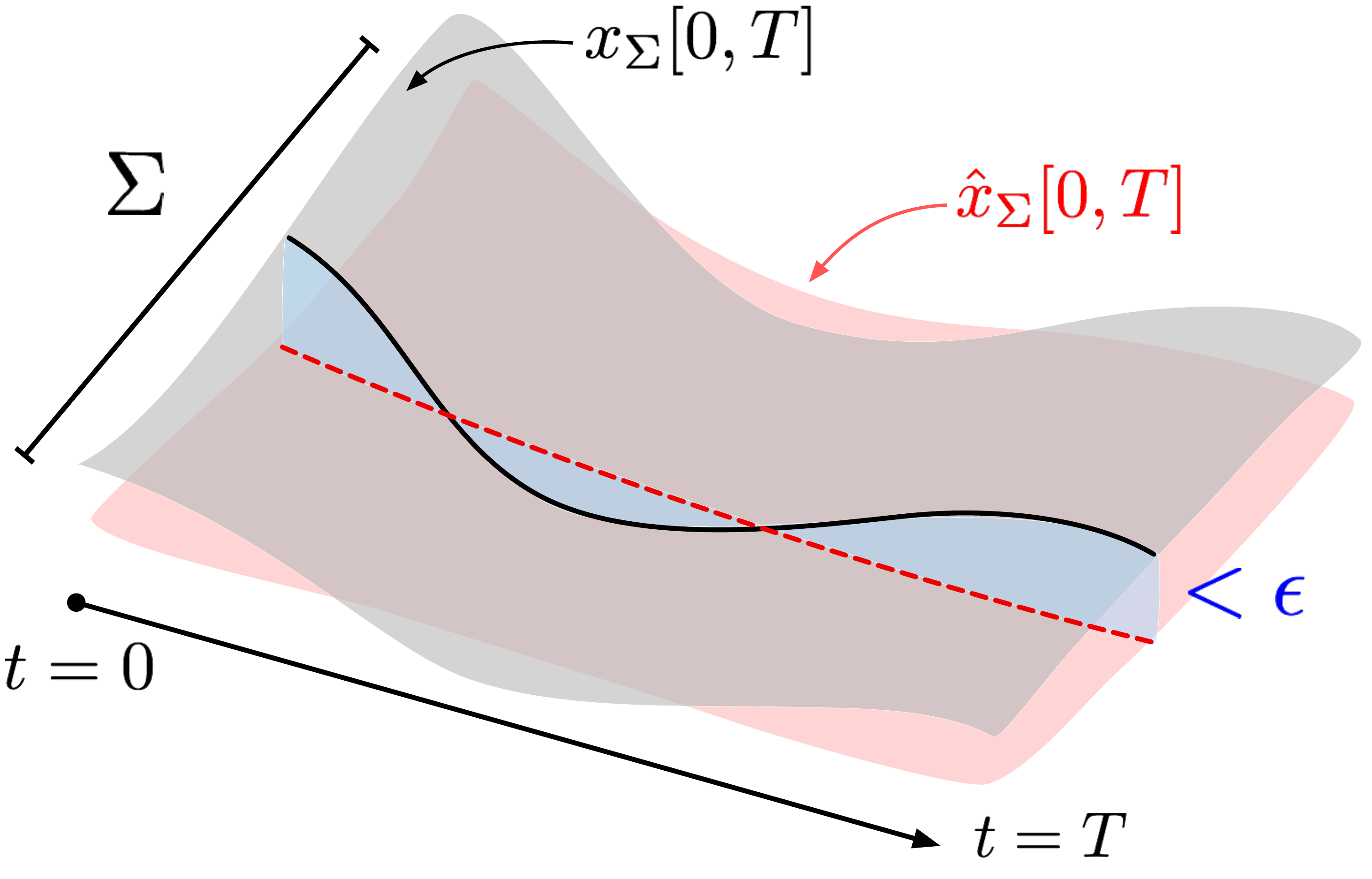}
	\caption{The red surface is a desired trajectory $\hat X_\Sigma[0,T]$ we want the ensemble formation system~\eqref{eq:rewrittenensemblesystem} to follow. Each $X_\sigma(t)$, for $(t,\sigma)\in [0,T]\times \Sigma$, belongs to an open, path connected set~$Q$.   The red dashed curve is a desired trajectory for the individual system-$\sigma$. The grey surface is the trajectory $X_\Sigma[0,T]$ generated by a common control input $u[0,T]$ such that it is within the $\epsilon$-tubular neighborhood of $\hat X_\Sigma[0,T]$. The black solid curve is the trajectory $X_\sigma[0,T]$ for the individual system-$\sigma$.}\label{fig:ensemblecontrol}
\end{center}	
\end{figure}

With the above preliminaries, we are now in a position to state the first main result of the paper (compared to Lemma~\ref{lem:resultforsetQ}):

\begin{theorem}\label{thm:main}
	Let $G$ be strongly connected and $N > (n + 1)$. Suppose that the set of parameterization functions $\{\rho_s\}^r_{s = 1}$ separates points and contains an everywhere nonzero function;  then,  system~\eqref{eq:rewrittenensemblesystem} is approximately ensemble path-controllable over the set $Q$ of nondegenerate configurations.     
\end{theorem}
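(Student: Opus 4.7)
My plan is to reduce Theorem~\ref{thm:main} to the ensemble Lie algebra rank condition of~\cite{agrachev2016ensemble}: it will suffice to show that at every $X \in Q$ the $\R$-linear span of iterated Lie brackets of the control vector fields
$$F_{ij,s}(X,\sigma) := \rho_s(\sigma)\,A_{ij}\,X,$$
evaluated at $X$, is dense in $L^1(\Sigma, T_X P)$. Analyticity of $F_{ij,s}$ in both arguments (linear in $X$; real analytic in $\sigma$ by hypothesis on the $\rho_s$) makes that theorem applicable.

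The first step is to compute the iterated brackets. Because each $F_{ij,s}$ is linear in $X$ with Jacobian $\rho_s(\sigma)\,A_{ij}$, induction on depth shows
$$[F_{(i_0,j_0),s_0},[F_{(i_1,j_1),s_1},\ldots,[F_{(i_{k-1},j_{k-1}),s_{k-1}},F_{(i_k,j_k),s_k}]]](X,\sigma) = (-1)^{k}\Big(\prod_{l=0}^{k}\rho_{s_l}(\sigma)\Big)\,M\,X,$$
where $M = [A_{i_0j_0},[A_{i_1j_1},\ldots,[A_{i_{k-1}j_{k-1}},A_{i_kj_k}]]]$ is the analogous depth-$k$ iterated commutator of primary matrices. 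The essential observation is that the sequence $(s_0,\ldots,s_k)$, which determines the scalar monomial, and the sequence $((i_0,j_0),\ldots,(i_k,j_k))$, which determines the matrix commutator, may be chosen \emph{independently}. Consequently, for every fixed $k \ge \d(G)$, every monomial $p\in\cal{P}(k+1)$, and every $M$ in the $\R$-span of depth-$k$ iterated matrix commutators of the $A_{ij}$'s, the vector field $\sigma \mapsto p(\sigma)\,M\,X$ already lies in the span of depth-$k$ Lie brackets of the $F_{ij,s}$'s. By Theorem~\ref{thm:familycodistset}, this matrix span is precisely the space $Z$ of zero-trace, zero-row-sum $N\times N$ matrices.

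It then remains to verify
$$\text{span}_{\R}\bigl\{\,\sigma\mapsto p(\sigma)\,M\,X \; :\; p\in\cal{P}(k+1),\; M\in Z,\; k\ge\d(G)\,\bigr\} \text{ is dense in } L^1(\Sigma, T_X P).$$
This decouples cleanly into two approximation problems. On the matrix side, nondegeneracy of $X$ together with the dimension bound $\dim Z = N^2-N-1 \ge Nn$ (valid whenever $N \ge n+2$) and an explicit realization argument of the type underlying Lemma~\ref{lem:resultforsetQ} should yield $\{MX : M\in Z\} = T_X P$. On the scalar side I would invoke the non-unital Stone-Weierstrass theorem on the subalgebra generated by $\cal{P}(\ge\d(G)+1)$ inside $C(\Sigma)$: closure under products is automatic, the everywhere-nonzero parameterization function $\rho_{s_0}$ supplies non-vanishing through $\rho_{s_0}^{\d(G)+1}$, and separation of points at that degree follows by comparing $\rho_s\rho_{s_0}^{\d(G)}$ with $\rho_s^{2}\rho_{s_0}^{\d(G)}$ for any $\rho_s$ separating the given pair. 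A finite-dimensional tensor-product density argument then propagates these to density of the displayed span in $C(\Sigma,T_X P)$, and hence in $L^1(\Sigma,T_X P)$ by compactness of $\Sigma$.

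The main obstacle will be Theorem~\ref{thm:familycodistset} itself, that is, the assertion that iterated commutators of \emph{every} fixed depth $k\ge\d(G)$ already exhaust the derived stochastic algebra $Z$. Without this fixed-depth exhaustion, the monomials one could realize would be confined to an impoverished subset of degrees and the decoupling between the Stone-Weierstrass step and the matrix-spanning step would break down; moreover, this is precisely the feature that distinguishes the present ensemble setting from the single-system analysis in~\cite{chen2017controllability} and that prevents a direct appeal to the distinguished-ensemble machinery of~\cite{chen2018structure}. Once Theorem~\ref{thm:familycodistset} is established, the remaining ingredients---independence of the two index sequences, Stone-Weierstrass, and finite-dimensional tensor density---should be essentially routine, and the controllability conclusion drops out of the ensemble Rachevsky-Chow theorem.
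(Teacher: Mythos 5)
Your proposal reproduces the paper's proof in all of its essential technical content: the same bracket computation for linear vector fields (including the key observation that the scalar indices $(s_0,\ldots,s_k)$ and the matrix indices $((i_0,j_0),\ldots,(i_k,j_k))$ can be chosen independently), the same reliance on Theorem~\ref{thm:familycodistset} for fixed-depth exhaustion of $\A^*$, the same matrix-spanning step (the paper's Prop.~\ref{prop:fullrankLX}, which indeed uses nondegeneracy of $X$ and exactly the bound $N>n+1$), and the same Stone--Weierstrass step with the everywhere-nonzero function absorbing the degree restriction. Your separation argument on the scalar side (comparing $\rho_s\rho_{s_0}^{\d(G)}$ with $\rho_s^2\rho_{s_0}^{\d(G)}$, both of admissible degree) is a correct variant of the paper's device, which instead approximates $\rho_1^{-\d(G)-1}(\sigma)\,c_i(t,\sigma)$ in the full unital algebra and then multiplies through by $\rho_1^{\d(G)+1}$.

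The one place where your route genuinely diverges, and where it is not quite complete, is the top-level wrapper. The paper does not invoke the ensemble Rachevsky--Chow theorem as a black box; it passes to the $k$th order Lie extended system (Lemma~\ref{lem:relationship}), reduces it via Theorem~\ref{thm:familycodistset} to $\dot X_\sigma=\sum_i\bigl[\sum_{\rp} u_{i,\rp}(t)\rp(\sigma)\bigr]A_i^*X_\sigma$, and then tracks the target directly by writing $\partial_t\tilde X_\sigma(t)=\sum_i c_i(t,\sigma)A_i^*\tilde X_\sigma(t)$ and approximating the coefficients $c_i$ uniformly on the compact set $[0,T]\times\Sigma$. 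That derivative-matching step is what delivers \emph{path}-controllability in the sense of Definition~\ref{def:ensemblecontrollability}, i.e.\ the $\epsilon$-tube statement for all $t$, and it is also where the joint $(t,\sigma)$-dependence of the coefficients is handled. The condition you verify --- density of the bracket span in $L^1(\Sigma,T_XP)$ at each fixed $X$ --- is, as the paper itself describes it, a sufficient condition for approximate (endpoint) ensemble controllability; inferring the tube statement over $Q$ from it requires the additional tracking/Gronwall argument that the paper supplies. This is repairable by standard means for driftless systems, but as written the final "drops out of the ensemble Rachevsky--Chow theorem" step is asserted rather than justified.
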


A sketch of proof will be given at the end of the section. Detailed analysis will be provided in Sections~\ref{sec:liealgebraofzrs} and~\ref{sec:proofofmainresult}.   
With a few more efforts, we can extend the above result to a time-varying digraph. We recall from~\cite{chen2017controllability} the following definition:

\begin{definition}\label{def:timevaryinggraph}
A time-varying digraph $G(t)$ is {\bf right-continuous} if for any time~$t$, there exists a time duration $\delta_t > 0$ such that $G(t') = G(t)$ for all $t'\in [t, t+\delta_t)$. We call an instant $t_i$ a switching time if $\lim_{t\to t_i -}G(t) \neq G(t_i)$. 
\end{definition}
   
We now assume that the information flow of every individual formation system in the ensemble is described by a common time-varying digraph $G(t)$. With the above definition, we state the following fact as a corollary to Theorem~\ref{thm:main}:

\begin{corollary}
	Let $G(t)$ be a right-continuous time-varying digraph such that for any finite time interval, $G(t)$ has a finite number of switching times. Suppose that for any $t \ge 0$, $G(t)$ is strongly connected with $N > (n + 1)$; then,  system~\eqref{eq:rewrittenensemblesystem} is approximately ensemble path-controllable over the set~$Q$ of nondegenerate configurations.
\end{corollary}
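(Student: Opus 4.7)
The plan is to reduce to Theorem~\ref{thm:main} via a finite time-splitting argument. By right-continuity of $G(t)$ and the finiteness of switching times on $[0,T]$, there exist instants $0 = t_0 < t_1 < \cdots < t_M = T$ such that on each half-open subinterval $[t_i, t_{i+1})$ the digraph $G(t)$ equals a fixed strongly connected digraph $G_i$ with $N > n+1$. On each subinterval the ensemble formation system is time-invariant with topology $G_i$, so Theorem~\ref{thm:main} (invoked under the same standing hypotheses on $\{\rho_s\}^r_{s=1}$) yields approximate ensemble path-controllability over $Q$ for that subinterval.

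Given a smooth target $\hat X_\Sigma[0,T]$ with $\hat X_\sigma(t)\in Q$ everywhere and a tolerance $\epsilon > 0$, I would first pick $\epsilon_0 \in (0,\epsilon)$ small enough that the $\epsilon_0$-tube around the target lies inside $Q$; this is possible because $Q$ is open and, by compactness of $\Sigma$, the image of $(t,\sigma)\mapsto \hat X_\sigma(t)$ is relatively compact in $P$. I then propagate an approximation inductively on $i$: suppose the profile $X_\Sigma(t_i)$ generated so far depends smoothly on $\sigma$ and satisfies $\|X_\sigma(t_i) - \hat X_\sigma(t_i)\| < \epsilon_0/M$ for every $\sigma$. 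On $[t_i, t_{i+1}]$, construct an auxiliary smooth target $\tilde X_\Sigma[t_i, t_{i+1}]$ that begins at $X_\Sigma(t_i)$, agrees with $\hat X_\Sigma$ outside a short initial segment, and satisfies $\|\tilde X_\sigma(t) - \hat X_\sigma(t)\| < \epsilon_0/(4M)$ throughout; a smooth cut-off interpolating between the two profiles suffices, and smoothness in $\sigma$ is inherited from $X_\Sigma(t_i)$. Applying Theorem~\ref{thm:main} to $\tilde X_\Sigma$ with fixed topology $G_i$ and tolerance $\epsilon_0/(4M)$ produces integrable controls $u_{ij,s}[t_i, t_{i+1}]$ whose generated trajectory is within $\epsilon_0/(4M)$ of $\tilde X_\Sigma$, and hence within $\epsilon_0/(2M)$ of $\hat X_\Sigma$. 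Concatenating these controls for $i = 0, \ldots, M-1$ produces a piecewise-integrable — therefore integrable — control on $[0,T]$ whose trajectory remains in the $\epsilon$-tube of $\hat X_\Sigma$.

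The step I expect to require the most care is matching initial conditions at the switching times. Theorem~\ref{thm:main} selects both the controls and the initial condition $X_\Sigma(t_i)$, but for $i \ge 1$ the initial condition is forced on us by the previous piece of the construction. I would reconcile this via continuous dependence of solutions of~\eqref{eq:rewrittenensemblesystem} on initial conditions over the finite horizon $[t_i, t_{i+1}]$: shrinking the cut-off tolerance slightly ensures that every initial condition in a sufficiently small ball around $X_\Sigma(t_i)$ — in particular the one returned by Theorem~\ref{thm:main} — generates a trajectory still within $\epsilon_0/M$ of the true target. Throughout, smoothness in $\sigma$ is preserved because the controls $u_{ij,s}$ do not depend on $\sigma$ and the vector fields in~\eqref{eq:rewrittenensemblesystem} are real analytic in $\sigma$ through the parameterization functions $\rho_s$.
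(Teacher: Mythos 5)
Your proposal is correct and follows essentially the same route as the paper: split $[0,T]$ at the switching times, apply Theorem~\ref{thm:main} on each segment with the fixed strongly connected digraph, and concatenate the resulting controls. The extra machinery you introduce (the $\epsilon_0/M$ tolerance budget, the auxiliary cut-off targets, and the continuous-dependence argument for matching initial conditions) is not needed, because Theorem~\ref{thm:main} is already stated so as to accept any initial profile within $\epsilon$ of the target's starting profile and to keep the whole trajectory within $\epsilon$; hence each segment can simply be started from the state reached at the previous switching time with the full tolerance $\epsilon$, with no error accumulation.
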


The proof of the corollary is similar to the proof of Corollary~1 in~\cite{chen2017controllability}. For completeness of presentation, we provide below a relatively short proof of the result: 

\begin{proof}
Let $\hat X_\Sigma[0,T]$  be a desired trajectory with $\hat X_\sigma(t)\in Q$ for all $(t,\sigma) \in [0,T]\times \Sigma$. Let $t_1, \ldots, t_m \in (0,T)$ be the switching times of $G(t)$. We construct an admissible $u[0,T]$ as follows. Given a graph $G(0)$, we know from Theorem~\ref{thm:main} that there exists $u_1[0,T]$ such that system~\eqref{eq:rewrittenensemblesystem} approximates $\hat X_\Sigma$ over $[0,T]$. We use this control until the first switching time:  $u[0,t_1) := u_1[0,t_1)$. It  follows that   $\|X_\sigma(t_1)-\hat   X_\sigma(t_1)\|<\epsilon$ for all $\sigma\in \Sigma$. We can thus apply Theorem~\ref{thm:main}, but now with graph $G(t_1)$, to obtain a control law $u_2[t_1,T]$ that steers the ensemble formation system from $X_\Sigma(t_1)$ along a trajectory $X_\Sigma(t)$ such that $\|X_\sigma(t)-\hat   X_\sigma(t)\|<\epsilon$ for all $(t, \sigma)\in [t_1, T] \in \Sigma$. As before, we let $u[t_1,t_2):=u_2[t_1,t_2)$. Note that implementing the control $u(t)$ over the time interval $[0,t_2)$ yields a trajectory  $X_\Sigma[0, t_2)$ within the $\epsilon$ tolerance of $\hat   X_\Sigma[0,t_2)$ over that interval. Repeating this procedure for a finite number of times yields a control input $u[0,T]$ that can steer the ensemble formation system to approximate $\hat   X_\Sigma[0,T]$ as required.	
\end{proof}

\begin{remark}\label{rmk:conjecture}{\em 
We note here that a more general case is to assume that the underlying digraphs $G_\sigma(t)$, for $\sigma\in \Sigma$, of the individual formation systems in the ensemble are heterogeneous. Specifically, we assume that there exists a finite static graph~$G$ such that for each $\sigma\in \Sigma$ and each time~$t$, $G_\sigma(t)$ is a subgraph of $G$. Denote by $G_\Sigma(t):= \{G_\sigma(t) \mid \sigma\in \Sigma\}$ the collection of the digraphs. We call $G_\Sigma(t)$ a {\em time-varying ensemble digraph}. Def.~\ref{def:timevaryinggraph} can be transposed here by replacing $G(\cdot)$ with $G_\Sigma(\cdot)$ in the definition. We defer to another occasion the analysis of an ensemble formation system defined over a (time-varying) ensemble digraph.     
}
\end{remark}

{\em Sketch of Proof.} The proof of Theorem~\ref{thm:main} relies on the use of the so-called ``Lie extension'' of system~\eqref{eq:rewrittenensemblesystem}. We will review such a technique in Section~\ref{ssec:lieextendedsys}. By repeatedly applying the technique of Lie extension,  one arrives at the following system (with a few details omitted): 
\begin{equation}\label{eq:lieextendedsystem1234}
	\dot X_\sigma(t) = \sum u_{\alpha}(t)\rho_s(\sigma)A_{ij}X_{\sigma}(t) +  
	\sum u_{\beta}(t)\rho_s(\sigma)\rho_{s'}(\sigma)[A_{ij}, A_{i'j'}] X_\sigma(t) + \cdots 
	\end{equation}       
	Truncation after the term that involves Lie products of depth~$k$ gives rises to the $k$th order Lie extended system.  
It is known that the original system~\eqref{eq:rewrittenensemblesystem} is approximately ensemble path-controllable if and only if one (and hence any) of its Lie extended system is. It thus suffices to establish controllability of Lie extended systems. The proof is composed of two key components as we outline below:   
    \begin{itemize}
    \item {\em Commutators of primary matrices.} The control vector fields in the above Lie extension~\eqref{eq:lieextendedsystem1234} involve iterated matrix commutators of primary matrices $A_{ij}$ for $v_iv_j$ an edge of the digraph~$G$. To evaluate those control vector fields, we compute explicitly the associated matrix commutators. This is done in Section~\ref{sec:liealgebraofzrs}. Specifically, we establish in the section the following fact (Theorem~\ref{thm:familycodistset}): Let 
    $$\A^*:= \left \{A\in \R^{N\times N} \mid A\1 = 0 \mbox{ and } \tr(A) = 0\right \};$$
    then, there exists a basis $\{A^*_i\}$ of $\A^*$ such that for any sufficiently large~$k$ ($k \ge \d(G)$), the matrices $A^*_i$ can be obtained as the  matrix commutators of $A_{ij}$'s of the given depth~$k$. Consequently, system~\eqref{eq:lieextendedsystem1234} can then be simplified as follows (details will be provided in Section~\ref{ssec:lieextendedsys}):
    \begin{equation}\label{eq:thelastsystemprepre}
 	\dot X_\sigma(t) = \sum_{\rp,i} u_{i, \rp}(t) \rp(\sigma)  	 A^*_iX_\sigma(t),
 	\end{equation} 
 	where each $\rp$ is a monomial of the functions~$\{\rho_s\}^r_{s = 1}$.  \vspace{.1cm} 
 	
    \item {\em Span of control vector fields.} We analyze system~\eqref{eq:thelastsystemprepre} in Section~\ref{sec:proofofmainresult}. We show that the control vector fields in~\eqref{eq:thelastsystemprepre} satisfy the ensemble version of the Lie algebraic rank condition. More specifically, we establish in the section two facts for system~\eqref{eq:thelastsystemprepre}---one is about the span of $\{A_i^*X\}$ while the other is about function approximation by the summation $\sum_{\rp,i} u_{i, \rp}(t)\rp(\sigma)$. Specifically, we establish the following two facts: {\em (i)} The span of $\{A^*_iX\}$ is $\R^{N\times n}$ for any nondegenerate configuration~$X$ provided that $N > (n + 1)$.   This is done in Prop.~\ref{prop:fullrankLX}, Section~\ref{ssec:linearspanofcontrolvecfield}; {\em (ii)} Every continuous function $c_i(t,\sigma)$ (continuous in both arguments) can be approximated arbitrarily well by a  finite sum $\sum_{\rp,i}u_{i, \rp}(t)\rp(\sigma)$. This is essentially an application of the Stone-Weierstrass theorem. 	
    \end{itemize}

\section{Stochastic Lie algebra and semi-codistinguished sets}\label{sec:liealgebraofzrs} 
Let $\A\subseteq \R^{N\times N}$ be the vector space of all zero-row-sum (\textit{zrs}) matrices, i.e., 
$$\A := \{A \in \R^{N\times N}\mid A \1 = 0\}.$$ Denote by $[\cdot, \cdot]$ the matrix commutator, i.e., for any two $N\times N$ matrices $A_1, A_2$, we have 
$[A_1, A_2]:= A_1A_2 - A_2 A_1$.   
It should be clear that $\A$ is a Lie algebra with the matrix commutator being the Lie bracket; indeed, if $A_1\1 = A_2 \1 = 0$, then 
$
[A_1, A_2] \1= 0
$. We call~$\A$ the {\bf stochastic Lie algebra}. 

Denote by $\tr(\cdot)$ the trace of a square matrix. For any two $N\times N$ matrices $A_1$ and $A_2$, we have $\tr([A_1, A_2]) = 0$.  Define a proper subspace $\A^*$ of $\A$ as follows:
\begin{equation}\label{eq:commutatorideal}
	\A^* := \{ A\in \A \mid \tr(A) = 0 \}.
\end{equation}
If we let $Z:=I - \1\1^\top/N$ and $\R Z$ be the one-dimensional subspace of $\A$ spanned by $Z$, then $\A = \A^*\oplus \R Z$. The codimension of $\A^*$ in~$\A$ is thus~$1$.  
Recall that the {\bf commutator ideal} of the Lie algebra $\A$ is defined by $[\A, \A]$, i.e., it is the linear span of all matrix commutators $[A, A']$ for $A, A' \in\A$. By computation (see, for example,~\cite{boukas2015structure}), 
\begin{equation}\label{eq:commutatoridealcomp}
[\A,\A] = \A^*. 
\end{equation}
We need the following definition: 

\begin{definition}
	 A Lie algebra $\g$ is {\bf perfect} if $\g = [\g,\g]$.
\end{definition}

By~\eqref{eq:commutatoridealcomp}, the stochastic Lie algebra~$\A$ is {\em not} perfect. Nevertheless, its commutator ideal is perfect:  

\begin{lemma}\label{lem:perfectliealgebra}
The Lie algebra~$\A^*$ is perfect. Let $\A^* = \A^*_\mathfrak{l} \oplus \A^*_\mathfrak{r}$ be the Levi decomposition where $\A^*_\mathfrak{l}$ is semi-simple and $\A^*_\mathfrak{r}$ is the radical of $\A^*$. Then, 
$$
\A^*_\mathfrak{l} = \{A\in \A^* \mid A^\top \1 = 0\} \quad \mbox{and} \quad \A^*_\mathfrak{r} = \{\1 v^\top \mid v^\top \1 = 0\}. 
$$
Moreover, $\A^*_\mathfrak{l}$ is isomorphic to the special linear Lie algebra $\sl_{N-1}(\R) := \{ M\in \R^{(N - 1)\times (N - 1)} \mid \tr(M) = 0 \}$.
\end{lemma}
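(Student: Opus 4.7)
My plan is to build the Levi decomposition by hand, identify the semisimple summand with $\sl_{N-1}(\R)$ via compression to $\1^\perp$, and then read off perfectness from perfectness of $\sl_{N-1}(\R)$ together with its standard-representation action on $\1^\perp$. The workhorse throughout is the orthogonal splitting $\R^N = \R\1 \oplus \1^\perp$ and the associated projector $P := I - \tfrac{1}{N}\1\1^\top$.

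First I would establish the vector-space direct sum $\A^* = \A^*_\mathfrak{l} \oplus \A^*_\mathfrak{r}$ via the explicit projection $A \mapsto A_\mathfrak{r} := \tfrac{1}{N}\1\1^\top A$ and $A_\mathfrak{l} := A - A_\mathfrak{r}$. Setting $v^\top := \tfrac{1}{N}\1^\top A$, the identity $A\1 = 0$ forces $v^\top \1 = 0$, so $A_\mathfrak{r} \in \A^*_\mathfrak{r}$, and direct computation yields $A_\mathfrak{l}\1 = 0$, $A_\mathfrak{l}^\top \1 = 0$, and $\tr(A_\mathfrak{l}) = 0$. Trivial intersection follows from $(\1 v^\top)^\top \1 = Nv$. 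The single one-line calculation
$$[A,\, \1 v^\top] \;=\; -\1\,(A^\top v)^\top,$$
valid for every $A \in \A$ and every $\1 v^\top \in \A^*_\mathfrak{r}$ (one checks $(A^\top v)^\top\1 = v^\top A\1 = 0$), simultaneously exhibits $\A^*_\mathfrak{r}$ as an abelian ideal of $\A$ (taking $A := \1 w^\top$ makes it vanish) and, by transposing, shows that $\A^*_\mathfrak{l}$ is closed under the matrix commutator.

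To identify $\A^*_\mathfrak{l}$ with $\sl_{N-1}(\R)$, I would pick any $N \times (N-1)$ matrix $U$ whose columns form an orthonormal basis of $\1^\perp$, so that $U^\top U = I_{N-1}$ and $UU^\top = P$, and define $\phi(A) := U^\top A U$. For $A \in \A^*_\mathfrak{l}$ the pair of conditions $A\1 = 0$ and $A^\top\1 = 0$ is equivalent to $PA = A = AP$, which gives the recovery formula $A = U\phi(A) U^\top$ and hence injectivity; the same identities yield $\phi(A_1)\phi(A_2) = U^\top A_1 P A_2 U = U^\top A_1 A_2 U$, so $\phi$ is a Lie algebra homomorphism. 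Trace is preserved, $\tr\phi(A) = \tr(AUU^\top) = \tr(AP) = \tr(A) = 0$, so $\phi$ lands in $\sl_{N-1}(\R)$; matching dimensions $(N-1)^2 - 1 = \dim \sl_{N-1}(\R)$ promote $\phi$ to an isomorphism.

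Finally, $\A^*_\mathfrak{l} \cong \sl_{N-1}(\R)$ is semisimple, and since $\A^*/\A^*_\mathfrak{r} \cong \A^*_\mathfrak{l}$ is semisimple, the abelian ideal $\A^*_\mathfrak{r}$ must be the full radical, which proves the claimed Levi decomposition. For perfectness, $[\A^*_\mathfrak{l}, \A^*_\mathfrak{l}] = \A^*_\mathfrak{l}$ is perfectness of $\sl_{N-1}(\R)$, while the formula $[A, \1 v^\top] = -\1(A^\top v)^\top$, combined with surjectivity of $A \mapsto A^\top v$ from $\A^*_\mathfrak{l}$ onto $\1^\perp$ for any fixed nonzero $v$ (which under $\phi$ is the standard $\sl_{N-1}(\R)$ action on $\R^{N-1}$), delivers $[\A^*_\mathfrak{l}, \A^*_\mathfrak{r}] = \A^*_\mathfrak{r}$. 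The one subtlety, and the main obstacle in the argument, is that this last surjectivity needs $N \geq 3$: when $N = 2$ the algebra $\A^*$ is one-dimensional and abelian and certainly not perfect. This restriction is harmless since the paper's standing hypothesis $N > n+1 \geq 2$ forces $N \geq 3$.
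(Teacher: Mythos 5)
Your proof is correct and follows essentially the same route as the paper's: both adapt a basis to the splitting $\R^N = \R\1 \oplus \1^\perp$, recognize $\A^*$ as the semidirect product $\sl_{N-1}(\R)\ltimes\R^{N-1}$, and deduce perfectness from perfectness of $\sl_{N-1}(\R)$ together with the irreducibility (surjectivity) of its standard action on the abelian ideal. The only difference is presentational: the paper conjugates by a matrix $U$ with $U\1 = e_N$ to reach block-triangular form and then transforms the Levi summands back, whereas you work in the original coordinates with the projector $I - \tfrac{1}{N}\1\1^\top$, which has the small advantage of producing the stated formulas for $\A^*_\mathfrak{l}$ and $\A^*_\mathfrak{r}$ directly.
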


The above fact has certainly been observed in the literature~\cite{boukas2015structure,guerra2018stochastic}.  
For completeness of presentation, we provide a proof in the Appendix.

The {\bf lower central series} $\{\A_m\}_{m \ge 0}$ of $\A$ can be defined by the recursion: $\A_0:=\A$ and $\A_{m + 1} = [\A,\A_m]$ for all $m \ge 0$. It follows from Lemma~\ref{lem:perfectliealgebra} that $\A_m = \A^*$ for all $m \ge 1$.        
Recall that a primary matrix $A_{ij}$ is given by $A_{ij} = e_i e^\top_j - e_i e^\top_i$.     
For the digraph $G$, we let $S_G$ be the collection of all primary matrices $A_{ij}$ such that $v_iv_j$ is an edge of~$G$, i.e., 
$$
S_G:= \{A_{ij} \mid v_iv_j \in E\}. 
$$
We also recall that the sets $\{\ad^m(S_G)\}_{m \ge 0}$ are also defined by the recursion: $\ad^0(S_G) = S_G$ and $\ad^{m + 1}(S_G) = [S_G, \ad^{m}(S_G)]$ for all 
$m \ge 0$. It should be clear that $\ad^m(S_G) \subset \A_m $. In particular, $\ad^m(S_G) \subset \A^*$ for all $m \ge 1$. 

We now state the main result of the section. First, for the given digraph $G$, we let $S^*_G$ be a finite subset of $\A^*$ defined as follows: 
\begin{equation}\label{eq:defsG}
S^*_G:= \{A_{jk} - A_{kj},  A_{ik} - A_{ij} \mid v_j v_k \in E\}.
\end{equation} 
Note that each matrix in the set $S^*_G$ can be obtained as a commutator of certain primary matrices (see \cite{chen2017controllability,costello2014degree}):    
    \begin{equation}\label{eq:basiccommutator}
    \begin{array}{ll}
    {[A_{ij}, A_{ji} ]} = A_{ji} - A_{ij}, & 1\le i\neq j \le N, \\
    {[A_{ij}, A_{jk} ]} = A_{ik} - A_{ij}, & 1\le i\neq j \neq k \le N, \\
    {[A_{ij}, A_{ik}]} = A_{ij} - A_{ik}, & 1\le i\neq j \neq k \le N. 
    \end{array}
    \end{equation} 
However, the matrices on the left hand side of~\eqref{eq:basiccommutator} do not necessarily belong to~$S_G$ and, hence, $S^*_G$ may not belong to $[S_G, S_G]$.  Nevertheless, we will show that if $m$ is sufficiently large, then $S^*_G \subset \ad^m(S_G)$. Said in another way, for sufficiently large~$m$, every matrix in $S^*_G$ can be obtained as a certain iterated matrix commutator of the $A_{ij}$'s in $S_G$ of the given depth~$m$. 
Precisely, we have the following fact: 

\begin{theorem}\label{thm:familycodistset}
	Let $G$ be a strongly connected digraph with at least three vertices ($N \ge 3$). Then, the set $S^*_G$ spans~$\A^*$. Moreover, there exists a positive integer~$l$, with $l\le \d(G)$, such that $S^*_{G}\subseteq \ad^m(S_G)$ for any $m \ge l$.    
\end{theorem}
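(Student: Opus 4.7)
The plan is to prove the two claims separately, relying on the commutation identities~\eqref{eq:basiccommutator} and on paths afforded by strong connectivity. For the spanning claim, note that $\dim \A^* = N(N-1) - 1$. The main tool is telescoping along directed paths: for any fixed $i$ and any $v_{j_1}, v_{j_2}$ distinct from $v_i$, strong connectivity provides a directed path $v_{j_1} = v_{m_0}, v_{m_1}, \ldots, v_{m_p} = v_{j_2}$, and the identity
\[
A_{i, j_2} - A_{i, j_1} = \sum_{q=0}^{p-1}\bigl(A_{i, m_{q+1}} - A_{i, m_q}\bigr)
\]
exhibits every ``intra-row'' difference as a linear combination of elements of $S^*_G$ (each associated with the edge $v_{m_q}v_{m_{q+1}} \in E$). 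Combined with the ``antisymmetric'' elements $A_{jk} - A_{kj}$ already in $S^*_G$, these together span $\A^*$ by a direct dimension count.

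For the depth-realization claim, my strategy is twofold: exhibit a base case at depth $l(X) \le \d(G)$ for each $X \in S^*_G$, then extend to all larger depths via a padding lemma. For the base case of $X = A_{ik} - A_{ij}$ with $v_jv_k \in E$, strong connectivity provides a shortest path $v_i = v_{i_0} \to v_{i_1} \to \cdots \to v_{i_r} = v_j$ with $r \le \d(G) - 1$, and a straightforward induction using $[A_{pq}, A_{qs}] = A_{ps} - A_{pq}$ shows that the left-nested commutator $[A_{i_0 i_1}, [A_{i_1 i_2}, \cdots, [A_{i_{r-1} i_r}, A_{jk}]]]$ evaluates to $X$ after telescoping. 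For $X = A_{jk} - A_{kj}$, I would use a cycle containing the edge $v_jv_k$ together with an outer $\ad(A_{jk})$ to extract the antisymmetric piece; the small three-vertex cases (e.g., $v_j = v_1, v_k = v_2$ in a $3$-cycle, where $[A_{12}, [A_{23}, [A_{31}, A_{12}]]] = A_{12} - A_{21}$ at depth $\d(G) = 3$) suggest the general pattern.

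The padding lemma asserts that, for each $X \in S^*_G$ associated with edge $v_jv_k$, one has $\ad(A_{jk})(X) = X$, so $X$ is realized at every depth $m \ge l(X)$ by prepending copies of $\ad(A_{jk})$ to the base-case expression. If $X = A_{jk} - A_{kj}$, then $[A_{jk}, X] = -[A_{jk}, A_{kj}] = -(A_{kj}-A_{jk}) = X$ by \eqref{eq:basiccommutator}. If $X = A_{ik} - A_{ij}$ with $i \ne j, k$, then a direct computation gives $[A_{jk}, A_{ik}] = 0$ and $[A_{jk}, A_{ij}] = A_{ij} - A_{ik}$, whence $[A_{jk}, X] = X$. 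Setting $l := \max_{X \in S^*_G} l(X) \le \d(G)$ then completes the proof.

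The main technical obstacle will be the base-case construction for the antisymmetric elements $A_{jk} - A_{kj}$: the naive ``cycle through the edge'' argument only bounds the depth by the length of the shortest cycle through $v_jv_k$, which can exceed $\d(G)$ by one (for example in a digraph where two long cycles are joined by a bridge containing $v_jv_k$). Achieving the sharp bound $l(X) \le \d(G)$ for every edge therefore requires a more careful construction that may exploit the strong connectivity more globally than via a single cycle. I expect this to be the only step that does not reduce to a short inductive telescoping argument.
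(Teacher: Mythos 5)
Your overall strategy coincides with the paper's: the spanning claim is proved by telescoping intra-row differences along directed paths (the paper routes this through the complete-graph set $S^*_{\rm K}$ and a cited basis of $\A^*$, whereas your dimension count is more self-contained, but both work), and the depth claim is proved by exhibiting a base case for each generator and then ``padding'' with $[A_{jk}, A_{jk}-A_{kj}] = A_{jk}-A_{kj}$ and $[A_{jk}, A_{ik}-A_{ij}] = A_{ik}-A_{ij}$, which is exactly the paper's semi-codistinguishedness (Prop.~\ref{prop:semi-codist}); your padding computations are correct. Two small points on your base case for $A_{ik}-A_{ij}$: a shortest path from $v_i$ to $v_j$ is only guaranteed to have length $r\le \d(G)$, not $\d(G)-1$ (this still yields depth $\le \d(G)$, so it is harmless); more substantively, your left-nested bracket with $A_{jk}$ innermost can break if the path passes through $v_k$, since the identity $[A_{pq},A_{qs}]=A_{ps}-A_{pq}$ requires $p,q,s$ pairwise distinct. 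The paper avoids this by building $A_{ij}-A_{i,i_{l-1}}$ along the path first (Lemma~\ref{lem:type2matrix}) and applying $\ad(A_{jk})$ once at the end, where the extra term $[A_{jk},A_{i,i_{l-1}}]$ vanishes whether or not $v_{i_{l-1}}=v_k$.

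The genuine gap is the one you flag yourself: the base case for the antisymmetric generators $A_{jk}-A_{kj}$ when $v_kv_j\notin E$ is left open, and that is precisely the step carrying the content of the bound $l\le\d(G)$, so the proposal is not a complete proof. For what it is worth, the paper's own treatment is exactly the ``naive cycle'' argument you distrust: it takes the cycle formed by the edge $v_jv_k$ followed by a shortest return path from $v_k$ to $v_j$ of length $l'$, invokes Lemma~\ref{lem:type1matrix} to place $A_{jk}-A_{kj}$ at depth $l'+1$, and asserts without justification that $l'\le\d(G)-1$. Your suspicion is well founded: the definition of $\d(G)$ only gives $d(v_k,v_j)\le\d(G)$, and one can build strongly connected digraphs containing an edge $v_jv_k$ with $d(v_k,v_j)=\d(G)$ exactly (e.g.\ $V=\{v_j,v_w,v_k,p_1,p_2\}$ with edges $v_jv_k$, $v_jv_w$, $v_wv_j$, $v_wp_1$, $v_kp_1$, $p_1p_2$, $p_2v_j$, $p_2v_w$, $p_2v_k$, where $\d(G)=3=d(v_k,v_j)$), so the shortest cycle through the edge has length $\d(G)+1$. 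Completing your argument at this point therefore requires either a different Lie product of depth $\le\d(G)$ evaluating to $A_{jk}-A_{kj}$, or settling for the weaker bound $l\le\d(G)+1$ --- the paper itself does not supply the missing construction.
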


We provide below an example illustrating Theorem~\ref{thm:familycodistset}: 

\begin{example}{\em 
Consider a cycle digraph $G = (V, E)$ composed of three vertices: $V = \{v_1,v_2,v_3\}$ and $E = \{v_1v_2, v_2v_3, v_3v_1\}$. In this case, we have $S_G = \{A_{12}, A_{23}, A_{31}\}$ and 
$$
	S^*_G = \{A_{12} - A_{21}, A_{23} - A_{32}, A_{31} - A_{13}, 
	A_{13} - A_{12}, A_{21}- A_{23}, A_{32} - A_{31}\}.
$$ 
For the purpose of illustration, we write explicitly the matrices in the set $S^*_G$:  
    \begin{equation*}\label{eq:example123}
    \begin{array}{ll}
    A_{12} - A_{21} = 
    \begin{bmatrix}
    	-1 & 1 & 0\\
    	-1 & 1 & 0\\
    	0 & 0 & 0
    \end{bmatrix},	&
    A_{23} - A_{32} = 
    \begin{bmatrix}
    	0 & 0 & 0\\
    	0 & -1 & 1\\
    	0 & -1 & 1
    \end{bmatrix},	\vspace{3pt} \\
    A_{31} - A_{13} = 
    \begin{bmatrix}
    	1 & 0 & -1\\
    	0 & 0 & 0\\
    	1 & 0 & -1
    \end{bmatrix}, &	
    A_{13} - A_{12} = 
    \begin{bmatrix}
    	0 & -1 & 1\\
    	0 & 0 & 0\\
    	0 & 0 & 0
    \end{bmatrix},	\vspace{3pt} \\
    A_{21} - A_{23} = 
    \begin{bmatrix}
    	0 & 0 & 0\\
    	1 & 0 & -1\\
    	0 & 0 & 0
    \end{bmatrix},	&
    A_{32} - A_{31} = 
    \begin{bmatrix}
    	0 & 0 & 0\\
    	0 & 0 & 0\\
    	-1 & 1 & 0
    \end{bmatrix}.	
    \end{array}
\end{equation*}
The above matrices span~$\A^*$ (any five out of the six matrices form a basis).    
We show below that $S^*_G\subset \ad^m(S_G)$ for any $m \ge 3$ ($\d(G) = 3$).   
For convenience, we introduce an index set as follows: $$\mathcal{I}:=\{(1,2,3),(2,3,1),(3,1,2)\}.$$
All triplets in ${\cal I}$ can be obtained by a cyclic rotation of~$(1,2,3)$. 
Then, by computation, we have that  
$$
\begin{array}{l}
	\ad(S_G) = \pm \cup_{(i,j,k)\in \mathcal{I}} \{A_{ik} - A_{ij}\},\\
	\ad^2(S_G) =\pm \cup_{(i,j,k)\in \mathcal{I}} \{A_{ik} - A_{ij}, A_{ik} - A_{kj}\},
\end{array}
$$
and 
$$
\ad^3(S_G) =   \pm \cup_{(i,j,k)\in \mathcal{I}}\left \{A_{ij} - A_{ji}, A_{ik} - A_{ij}, 
 A_{ik} -A_{kj}, 2A_{ij} - A_{ik} - A_{ji} \right \}.
$$
It follows that $S^*_G\subset \ad^3(S_G)$. Moreover, since $\ad^2(S_G)\subset \ad^3(S_G)$, $\ad^m(S_G) \subset \ad^{m+1}(S_G)$ for all $m \ge 3$. We thus conclude that $S^*_G \subset \ad^m(S_G)$ for any $m \ge 3$. 
\hfill{\qed}
}  	
\end{example}

In the remainder of the section, we establish facts that are relevant to the proof of Theorem~\ref{thm:familycodistset}.

\subsection{Semi-codistinguished sets}
We consider here the adjoint action of the stochastic Lie algebra~$\A$ on its commutator ideal~$\A^*$, i.e., $\ad(A)(A^*)= [A, A^*]$ for any $A\in \A$ and $A^*\in \A^*$.   
We introduce the following definition adapted from~\cite{chen2018structure}:

\begin{definition}[Semi-codistinguished set]\label{def:codistinguished} 
	A subset $\{A^*_j\}^p_{j= 1}$ of $\A^*$ is {\bf semi-codistinguished} to a subset $\{A_i\}^m_{i = 1}$ of $\A$ if the following hold:
	\begin{enumerate}
	\item The set $\{A^*_j\}^p_{j = 1}$ spans $\A^*$.  
	\item For any $A^*_k$ in the set $\{A^*_j\}^p_{j = 1}$, there exist $A_i$, $A^*_j$, and a {\em nonzero} $\lambda$ such that 
		  \begin{equation}\label{eq:codist}
		  	[A_i, A^*_j] = \lambda A^*_k.
		  \end{equation}
	\end{enumerate}	  
\end{definition}
 
\begin{remark} {\em 
A stronger notion, termed {\em codistinguished set}, is introduced in~\cite{chen2018structure} (which was defined for arbitrary Lie algebras): A set $\{A^*_{j}\}^p_{j = 1}$ is codistinguished to $\{A_i\}^m_{i = 1}$ if it is semi-codistinguished, and moreover, for any $A_i$ and any $A^*_j$, there exist  $A^*_k$ and a constant $\lambda$ (which could be zero) such that~\eqref{eq:codist} holds. Existence of a codistinguished set for the case where $\g$ is semi-simple was addressed in~\cite{chen2017distinguished}. 
}  
\end{remark}

We establish in the subsection the following result: 

\begin{proposition}\label{prop:semi-codist}
	If $G$ is strongly connected, then $S^*_G$ is semi-codistinguished to~$S_G$. Moreover, for any matrix $A^*_k\in S^*_G$, there exist $A_i\in S_G$ and $A^*_j\in S^*_G$ such that the nonzero constant~$\lambda$ in~\eqref{eq:codist} takes value~$1$.    
\end{proposition}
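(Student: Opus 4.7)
The proposition has two ingredients to verify: first, that $S^*_G$ spans $\A^*$ (item~(1) of Definition~\ref{def:codistinguished}), and second, that each $A^*_k \in S^*_G$ is realized as a commutator $[A_i, A^*_j]$ with $A_i \in S_G$, $A^*_j \in S^*_G$, and the nonzero scalar $\lambda$ taken to equal $1$. I will concentrate the main argument on the commutator realization, which is handled by a short case analysis invoking strong connectivity of $G$ only to secure the existence of a suitable out-edge. The spanning part I will handle by a separate telescoping argument along directed paths.

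For the commutator realization, split on the type of $A^*_k$. \emph{Type I}, $A^*_k = A_{jk} - A_{kj}$ with $v_jv_k \in E$: take $A_i := A_{jk} \in S_G$ and $A^*_j := A_{jk} - A_{kj} \in S^*_G$. Since $[A_{jk}, A_{jk}] = 0$ and formula~\eqref{eq:basiccommutator}(i) gives $[A_{jk}, A_{kj}] = A_{kj} - A_{jk}$, one finds $[A_{jk}, A_{jk} - A_{kj}] = A_{jk} - A_{kj}$. \emph{Type II}, $A^*_k = A_{ik} - A_{ij}$ with $v_jv_k \in E$ and $i \neq j, k$: strong connectivity ensures $v_i$ has at least one out-neighbor $v_l$, and the witnesses depend on whether $l \in \{j, k\}$. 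If $l \notin \{j, k\}$, take $A_i := A_{il}$ and $A^*_j := A_{lk} - A_{lj}$ (a Type II element of $S^*_G$ for edge $v_jv_k$ based at $l$, valid because $l \neq j, k$); applying $[A_{il}, A_{lm}] = A_{im} - A_{il}$ (formula~\eqref{eq:basiccommutator}(ii)) with $m = j, k$ yields $[A_{il}, A_{lk} - A_{lj}] = (A_{ik} - A_{il}) - (A_{ij} - A_{il}) = A_{ik} - A_{ij}$. If $l = j$, take $A_i := A_{ij}$, $A^*_j := A_{jk} - A_{kj}$; then $[A_{ij}, A_{jk}] = A_{ik} - A_{ij}$ while $[A_{ij}, A_{kj}] = 0$ (the two matrices share their second index $j$), so the difference is exactly $A_{ik} - A_{ij}$. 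Symmetrically, if $l = k$, take $A_i := A_{ik}$, $A^*_j := A_{jk} - A_{kj}$; this time $[A_{ik}, A_{jk}] = 0$ and $[A_{ik}, A_{kj}] = A_{ij} - A_{ik}$, again yielding $A_{ik} - A_{ij}$. In each sub-case $\lambda = 1$, and at least one sub-case always applies.

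For the spanning clause, the plan is to show that for any distinct $a, b, c$ the row-supported difference $A_{ab} - A_{ac}$ lies in $\span(S^*_G)$ by telescoping along a directed $v_c$-to-$v_b$ path in $G$ (which exists by strong connectivity), and then to use the Type~I elements $A_{jk} - A_{kj}$ to cover the one remaining trace-adjusting direction in the hyperplane $\A^* \subset \A$. The main technical obstacle I anticipate is the configuration where every directed $v_c$-to-$v_b$ path must pass through $v_a$: the naive telescoping then produces an isolated primary matrix at the occurrence $w_s = a$ (since $A_{aa} = 0$), and I would deal with this by splicing in the Type~I elements $A_{aw_{s\pm 1}} - A_{w_{s\pm 1}a}$ (available because the two path-edges incident to $v_a$ are in $G$) together with compensating Type~II corrections based at $v_{w_{s\pm 1}}$ to remain inside $\span(S^*_G)$. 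The bookkeeping in that step is the most delicate part of the argument, while the commutator realization above is essentially mechanical.
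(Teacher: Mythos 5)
Your commutator-realization step is correct, and every sub-case does produce $\lambda=1$, but it is considerably more roundabout than necessary: the paper brackets each element of $S^*_G$ with the primary matrix $A_{jk}$ of its \emph{defining} edge $v_jv_k$ and gets both types in one line, $[A_{jk},A_{jk}-A_{kj}]=A_{jk}-A_{kj}$ and $[A_{jk},A_{ik}-A_{ij}]=A_{ik}-A_{ij}$ (the latter because $[A_{jk},A_{ik}]=0$ and $[A_{jk},A_{ij}]=-(A_{ik}-A_{ij})$). This avoids your three-way case split on an out-neighbor of $v_i$ and does not even invoke strong connectivity for this half of the claim.

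The genuine gap is in the spanning clause, which you leave as a plan rather than a proof. Two things are unresolved. First, the case you correctly flag --- every directed $v_c$-to-$v_b$ path forced through $v_a$, which does occur (e.g.\ in a directed cycle) --- is exactly where the telescoping identity breaks, and "splicing in Type~I elements together with compensating Type~II corrections" is not carried out; since you yourself call this "the most delicate part," the argument is incomplete precisely at its hardest point. Second, your closing reduction is miscounted: the span of all row-supported differences $A_{ab}-A_{ac}$ has codimension $N-1$ in $\A^*$, not $1$ (modulo those differences a matrix in $\A$ is determined by its diagonal, an $(N-1)$-dimensional quotient after imposing zero trace), so you need the Type~I elements $A_{jk}-A_{kj}$, whose diagonals are $e_k-e_j$, to span $\1^\perp$ via weak connectivity --- not to "cover one remaining direction." The paper sidesteps both issues by a different decomposition: it quotes the known fact that $S^*_\K$ (the complete-graph version) spans $\A^*$, and then writes each element of $S^*_\K$ as an explicit integer combination of elements of $S^*_G$, using a plain telescoping sum for $A_{ik}-A_{ij}$ and a separate four-term identity along a $v_j$-to-$v_k$ path for $A_{jk}-A_{kj}$. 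If you want to keep your direct route, you must actually produce the splicing identity in the forced-through-$v_a$ case and correct the dimension count; as written, the spanning half of the proposition is not established.
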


Note that Prop.~\ref{prop:semi-codist} implies that if there exists an~$l$ such that $S^*_G\subset \ad^l(S_G)$, then $S^*_G \subset \ad^m(S_G)$ for any $m \ge l$. We prove Prop.~\ref{prop:semi-codist} below:

\begin{proof}[Proof of Proposition~\ref{prop:semi-codist}.]
	We prove the proposition by showing that the two items of Def.~\ref{def:codistinguished} are satisfied. 
	We first show that $S^*_G$ spans $\A^*$. Denote by~$\K$ the complete graph on~$N$ vertices. Correspondingly, we have that 
	$$
	S^*_\K = \{A_{jk} - A_{kj}, A_{ik} - A_{ij} \mid v_j \neq v_k\}. 
	$$ 
 	We note here the fact that the set $S^*_\K$ spans~$\A^*$. We omit a proof of the fact, but refer to~\cite{boukas2015structure} for details. The authors there provided a basis of~$\A^*$. The elements of the basis can be realized as integer combinations of the matrices  in $S^*_{\rm K}$.  It now suffices to show that each matrix in~$S^*_\K$ can be expressed as a linear combination of the matrices in~$S^*_G$. 
	
	We start by considering matrices of type $A_{ik} - A_{ij}\in S^*_\K$. Since~$G$ is strongly connected, there exists a path from~$v_j$ to~$v_k$. Denote such a path by $v_{i_0} v_{i_1}\cdots v_{i_l}$ where $v_{i_0} = v_j$ and $v_{i_l} = v_k$. By definition of $S^*_G$~\eqref{eq:defsG}, we have that 
	$$\{A_{i, i_p} - A_{i,i_{p-1}} \mid p = 1,\ldots, l \} \subset S^*_G.$$  
	One can thus express $A_{ik} - A_{ij}$ as follows:
	$$
	A_{ik} - A_{ij} = \sum^{l}_{p = 1} (A_{i, i_p} - A_{i,i_{p-1}}), 
	$$
	which is an integer combination of matrices in $S^*_G$.

	We next consider matrices of type $A_{jk} - A_{kj}\in S^*_\K$. We again let  $v_{i_0} \cdots v_{i_l}$ be the path from $v_j$ to $v_k$. Similarly, one can express $A_{jk} - A_{kj}$ as follows: 
	\begin{multline*}
	A_{jk} - A_{kj} = (A_{i_0i_l} - A_{i_0i_1}) + (A_{i_li_{l-1}} - A_{i_li_0}) \\ +  \sum^{l}_{p = 1}(A_{i_{p-1}i_p} - A_{i_pi_{p-1}})  + \sum^{l-1}_{p = 1}(A_{i_pi_{p-1}} - A_{i_pi_{p+1}}).
	\end{multline*}
	By~\eqref{eq:defsG}, 
	each term $(A_{i_{p-1}i_p} - A_{i_pi_{p-1}})$ for $p = 1, \ldots, l $ on the right-hand side of the above expression belongs to $S^*_G$. By the earlier arguments, any other term on the right-hand side can be expressed as an integer combination of matrices in $S^*_G$. 
	We have thus shown that every matrix in $S^*_\K$ can be expressed as an integer combination of matrices in $S^*_G$. Since $S^*_\K$ spans $\A^*$,  $S^*_G$ spans~$\A^*$ as well.

	Finally, we show that any matrix in $S^*_G$ can be obtained as a matrix commutator $[A, A^*]$ with $A\in S_G$ and $A^*\in S^*_G$. But this directly follows from the computation: For any $v_jv_k \in E$, we have 
	$$
	\begin{array}{l}
		{ [A_{jk}, A_{jk} - A_{kj}] }= A_{jk} - A_{kj}, \\ 
		{ [A_{jk}, A_{ik} - A_{ij}] }= A_{ik} - A_{ij}.
	\end{array}
	$$
	This completes the proof. 
\end{proof}

\begin{remark}\label{rmk:notcodist}{\em
	We note here that the set $S^*_G$ is in general {\em not} codistinguished to $S_G$. Consider, for example, the case where $G$ is the complete graph $\K$. If $1\le i\neq j \neq k\le N$, then by computation, we have that 
	$$
	\begin{array}{l}
		{ [A_{ji}, A_{ik} - A_{ij}] } =  (A_{jk} - A_{ji}) - (A_{ij} -A_{ji}),  \\
		{ [A_{ki}, A_{ik} - A_{ij}] } = (A_{ki} - A_{ik}) - (A_{ki} - A_{kj}), \\
		{ [A_{ik}, A_{ij} - A_{ji}] }  = (A_{ik} - A_{ij}) - (A_{ji} - A_{jk}).
	\end{array}
	$$
	The matrices on the right-hand sides of the above expressions do not belong to $S^*_\K$, but are expressed as integer combinations of the matrices in~$S^*_\K$.  }
\end{remark}

\subsection{Proof of Theorem~\ref{thm:familycodistset}.}  
We establish in the subsection Theorem~\ref{thm:familycodistset}. With Prop.~\ref{prop:semi-codist}, it remains to show that there exists a positive integer~$l$, with $l \le \d(G)$, such that $S^*_G \subseteq \ad^l(S_G)$. The result will be established after a sequence of lemmas. We start with the following fact:

\begin{lemma}\label{lem:negativethesame}
	For any $l\ge 1$, $\ad^l(S_G) = - \ad^l(S_G)$.
\end{lemma}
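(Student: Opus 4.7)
My plan is to prove this by induction on $l \ge 1$, the whole argument hinging on antisymmetry of the matrix commutator combined with linearity in either slot.

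For the base case $l = 1$, I would note that $\ad(S_G) = [S_G, S_G]$ is, by the set-level convention introduced in Section~\ref{sec:definitions}, the finite set of commutators $[A, B]$ where both $A$ and $B$ range over $S_G$. Since $-[A, B] = [B, A]$ and the pair $(B, A)$ is equally admissible, every negative of an element of $\ad(S_G)$ lies back in $\ad(S_G)$, which already yields $-\ad(S_G) \subseteq \ad(S_G)$. Negating both sides upgrades this to equality.

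For the inductive step, assuming $\ad^{l-1}(S_G) = -\ad^{l-1}(S_G)$, a typical element of $\ad^l(S_G) = [S_G, \ad^{l-1}(S_G)]$ has the form $[A, B]$ with $A \in S_G$ and $B \in \ad^{l-1}(S_G)$. Linearity of the bracket in the second argument gives $-[A, B] = [A, -B]$, and by the inductive hypothesis $-B$ again belongs to $\ad^{l-1}(S_G)$. Hence $-[A, B] \in \ad^l(S_G)$, which establishes $-\ad^l(S_G) \subseteq \ad^l(S_G)$, and the reverse inclusion follows again by negating both sides.

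I do not anticipate any genuine obstacle here, as the result is essentially a bookkeeping consequence of antisymmetry. The one point worth flagging is why the hypothesis $l \ge 1$ is necessary: the base set $S_G$ itself is \emph{not} closed under negation, since the negative of a primary matrix $A_{ij} = e_i e_j^\top - e_i e_i^\top$ is not of primary form and need not lie in $S_G$. A single bracket application is precisely what creates a swappable pair, after which the induction propagates the symmetry to all higher depths.
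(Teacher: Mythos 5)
Your proof is correct and follows essentially the same route as the paper: induction on $l$, with the base case handled by antisymmetry ($[A,A'] = -[A',A]$) and the inductive step by passing $-B \in \ad^{l-1}(S_G)$ through the bracket via $[A,-B] = -[A,B]$. The closing remark about $S_G$ itself not being closed under negation is a nice touch not present in the paper, but the core argument is identical.
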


\begin{proof}
The proof can be carried out by induction on~$l$. For the base case~$l = 1$, we have that for any $A, A'\in S_G$, $[A, A'] = -[A', A]$. For the inductive step, we assume that the lemma holds for~$(l - 1)$ and prove for~$l$ (with $l \ge 2$). Consider any $[A, A']\in \ad^l(S_G)$ with $A\in S_G$ and $A'\in \ad^{l -1}(S_G)$. By the induction hypothesis, $-A'\in \ad^{l-1}(S_G)$. It follows that $[A, -A'] = - [A, A'] \in \ad^l(S_G)$.  	
\end{proof}

Recall that there are two different types of matrices in the set $S^*_G$, namely $A_{ik} - A_{ij}$ and $A_{jk} - A_{kj}$ (where $v_j v_k$ is an edge of~$G$). To show that every type of matrix can be obtained by an iterated matrix commutator of primary matrices in $S_G$, we need the following two lemmas (Lemmas~\ref{lem:type2matrix} and~\ref{lem:type1matrix}):

\begin{lemma}\label{lem:type2matrix}
	Let $v_{i_0}\cdots v_{i_l}$ be a path of length~$l$ in the digraph~$G$. Suppose that $l \ge 2$ and $v_{i_0} \neq v_{i_l}$; then,  
	$$
	A_{i_0 i_l } - A_{i_0 i_{l-1}} \in \ad^{l-1}(S_G). 
	$$
	See Fig.~\ref{fig:generator1} for an illustration.
\end{lemma}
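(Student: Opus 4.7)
The plan is to induct on the path length $l$, peeling off one vertex at a time from the front.

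For the base case $l = 2$, the path is $v_{i_0}v_{i_1}v_{i_2}$ with $v_{i_0},v_{i_1},v_{i_2}$ pairwise distinct (since $v_{i_0}\neq v_{i_2}$ is given and consecutive path vertices are distinct). Both $A_{i_0i_1}$ and $A_{i_1i_2}$ lie in $S_G$, and the identity $[A_{i_0i_1}, A_{i_1i_2}] = A_{i_0i_2} - A_{i_0i_1}$ from~\eqref{eq:basiccommutator} produces the desired element of $\ad^{1}(S_G)$.

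For the inductive step, assume $l\ge 3$ and the claim for $(l-1)$. Consider the subpath $v_{i_1}v_{i_2}\cdots v_{i_l}$, which has length $l-1$ and whose endpoints $v_{i_1}$ and $v_{i_l}$ are distinct (since $v_{i_0}\neq v_{i_l}$ forces all $l+1$ vertices of the original path to be pairwise distinct). The induction hypothesis yields
\[
A_{i_1 i_l} - A_{i_1 i_{l-1}} \in \ad^{l-2}(S_G).
\]
Taking the commutator with $A_{i_0 i_1}\in S_G$ and expanding via the identities in~\eqref{eq:basiccommutator},
\[
[A_{i_0 i_1},\, A_{i_1 i_l} - A_{i_1 i_{l-1}}] = (A_{i_0 i_l} - A_{i_0 i_1}) - (A_{i_0 i_{l-1}} - A_{i_0 i_1}) = A_{i_0 i_l} - A_{i_0 i_{l-1}},
\]
which places $A_{i_0 i_l} - A_{i_0 i_{l-1}}$ in $\ad^{l-1}(S_G)$, as required.

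The only delicate point, and the main thing to track, is the pairwise-distinctness of indices needed to invoke the second line of~\eqref{eq:basiccommutator}: computing $[A_{i_0i_1}, A_{i_1 i_l}]$ requires $i_0,i_1,i_l$ distinct, and $[A_{i_0i_1}, A_{i_1 i_{l-1}}]$ requires $i_0,i_1,i_{l-1}$ distinct. Both are ensured by the observation above that all $l+1$ path vertices are pairwise distinct once we assume $v_{i_0}\neq v_{i_l}$; in particular $i_{l-1}\neq i_1$ (which is why the base case $l=2$, where $i_{l-1}=i_1$, must be treated separately and cannot be absorbed into the induction).
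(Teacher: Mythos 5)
Your proof is correct and follows the same induction-on-$l$ scheme as the paper, with one mirror-image difference: you peel the \emph{first} edge off the path (applying the induction hypothesis to $v_{i_1}\cdots v_{i_l}$ and then bracketing with $A_{i_0i_1}$), whereas the paper peels the \emph{last} edge (applying the hypothesis to $v_{i_0}\cdots v_{i_{l-1}}$ and bracketing with $A_{i_{l-1}i_l}$). A small payoff of your ordering is that the induction hypothesis already hands you the element in the right sign, so you never need Lemma~\ref{lem:negativethesame}; the paper's version must first negate $A_{i_0i_{l-1}}-A_{i_0i_{l-2}}$ via that lemma before bracketing. The small cost is that your step uses two nonvanishing bracket identities from~\eqref{eq:basiccommutator} and hence needs $v_{i_l}$ and $v_{i_{l-1}}$ to be distinct from $v_{i_1}$ (which you correctly justify from the pairwise distinctness of the path's vertices under the hypothesis $v_{i_0}\neq v_{i_l}$), while the paper's step only needs one nontrivial bracket because $[A_{i_{l-1}i_l},A_{i_0i_{l-2}}]=0$. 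Both computations check out against Definition~\ref{def:primarymatrix}, and your bookkeeping of the distinctness conditions --- including the observation that the base case $l=2$ cannot be absorbed into the inductive step --- is accurate.
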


\begin{proof}
	The proof will be carried out by induction. For the base case where $l = 2$, we have that $[A_{i_0i_1}, A_{i_1i_2}] = A_{i_0 i_2} - A_{i_0 i_1}  = \ad(S_G)$. 
	For the inductive step, we assume that the lemma holds for $(l - 1)$ with $l \ge 3$, and prove for~$l$. By the induction hypothesis, we have~$A_{i_0i_{l-1}} - A_{i_0i_{l-2}}\in \ad^{l-2}(S_G)$. By Lemma~\ref{lem:negativethesame}, $-(A_{i_0i_{l-1}} - A_{i_0i_{l-2}})\in \ad^{l-2}(S_G)$. It then follows that 
	$$
	[A_{i_{l-1}i_l}, A_{i_0i_{l-2}} - A_{i_0i_{l-1}}] = A_{i_0i_l} - A_{i_0i_{l-1}} \in \ad^{l-1}(S_G).$$ 
 This completes the proof. 
\end{proof}

\begin{figure}[h]
	\begin{center}
	\includegraphics[width = 0.6\textwidth]{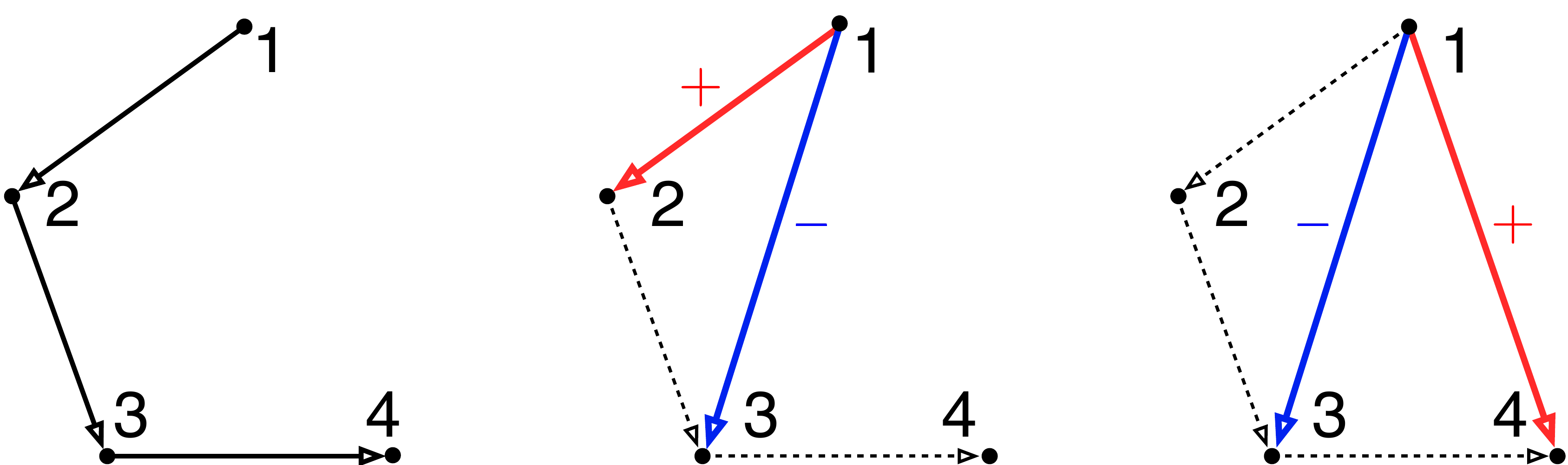}
	\caption{
	In the figure, $v_1v_2v_3v_4$ is a path of a digraph $G$. The set of primary matrices $\{A_{12}, A_{23}, A_{34}\}$ belongs to $S_G$. We show how to generate $A_{14} - A_{13}$ by an iterated matrix commutator of those primary matrices. In the first step, we have $[A_{23}, A_{12}] = A_{12} - A_{13}\in \ad(S_G)$.  
	The plus/minus sign of an edge $v_iv_j$ in the figure indicates the sign of the associated matrix $A_{ij}$ in the expression. 
	Next, we have $[A_{34}, [A_{23}, A_{12}]] = [A_{34}, A_{12} - A_{13}] = A_{14} - A_{13}\in \ad^2(S_G)$.    
	}\label{fig:generator1}
	\end{center}
\end{figure}

We next have the following fact: 

\begin{lemma}\label{lem:type1matrix}
	Let $v_{i_0}v_{i_1} \cdots v_{i_l}v_{i_0}$ be a cycle of length $(l+1)$ in the digraph~$G$. Suppose that $l  \ge 2$; then, 
	$$A_{i_0 i_1 } - A_{i_1 i_0} \in \ad^{l + 1} (S_G).$$
See Fig.~\ref{fig:generator2} for an illustration. 
\end{lemma}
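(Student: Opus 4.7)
My plan is to reduce the lemma to one application of Lemma~\ref{lem:type2matrix} on a carefully chosen length-$l$ path in the cycle, followed by two nested commutators with primary matrices from $S_G$ corresponding to the two cycle edges incident to $v_{i_1}$. The guiding idea is that the target difference $A_{i_0 i_1}-A_{i_1 i_0}$ is exactly the two-cycle commutator $[A_{i_0 i_1}, A_{i_1 i_0}]$, so once a suitable surrogate for $A_{i_1 i_0}$ has been built inside $\ad^{l}(S_G)$ via shorter iterated commutators, one final bracket with the primary matrix $A_{i_0 i_1}\in S_G$ will finish the job.

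Concretely, I would first consider the path
\begin{equation*}
v_{i_2}\,v_{i_3}\cdots v_{i_l}\,v_{i_0}\,v_{i_1},
\end{equation*}
which has length $l$, distinct endpoints $v_{i_2}$ and $v_{i_1}$, pairwise distinct vertices, and uses every cycle edge except $v_{i_1}v_{i_2}$. Lemma~\ref{lem:type2matrix} then places $A_{i_2 i_1} - A_{i_2 i_0}\in \ad^{l-1}(S_G)$. Next I would form $[A_{i_1 i_2},\,A_{i_2 i_1} - A_{i_2 i_0}]$: using the two-cycle and chain identities from~\eqref{eq:basiccommutator}, the two pieces evaluate to $A_{i_2 i_1}-A_{i_1 i_2}$ and $A_{i_1 i_0}-A_{i_1 i_2}$ respectively, the parasitic $A_{i_1 i_2}$ terms cancel, and we obtain $A_{i_2 i_1} - A_{i_1 i_0}\in \ad^{l}(S_G)$. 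Finally, I would form $[A_{i_0 i_1},\,A_{i_2 i_1} - A_{i_1 i_0}]$: the $A_{i_2 i_1}$ contribution vanishes because primary matrices sharing a column index but differing in row index annihilate each other under multiplication, while the $A_{i_1 i_0}$ contribution is the standard two-cycle commutator $A_{i_1 i_0} - A_{i_0 i_1}$. Subtracting yields $A_{i_0 i_1} - A_{i_1 i_0}\in \ad^{l+1}(S_G)$, as required.

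The only step not directly furnished by~\eqref{eq:basiccommutator} is the auxiliary annihilation $[A_{i_0 i_1}, A_{i_2 i_1}]=0$, but this is immediate from the rank-one structure $A_{pq}= e_p(e_q-e_p)^\top$: when $p,q,r$ are pairwise distinct, both $A_{pq}A_{rq}$ and $A_{rq}A_{pq}$ vanish because of the hidden inner products $e_q^\top e_r$ and $e_p^\top e_r$. I therefore do not anticipate any genuine obstacle; the main care is in choosing the initial path so that its two closing edges are precisely the two cycle edges incident to $v_{i_1}$, which guarantees that all parasitic terms in the final two commutators cancel cleanly and leave exactly the desired difference.
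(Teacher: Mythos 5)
Your proposal is correct and follows essentially the same route as the paper's proof: the same length-$l$ path $v_{i_2}\cdots v_{i_l}v_{i_0}v_{i_1}$ fed into Lemma~\ref{lem:type2matrix}, followed by the same two brackets with $A_{i_1 i_2}$ and then $A_{i_0 i_1}$. Your explicit verification of the annihilation $[A_{i_0 i_1}, A_{i_2 i_1}]=0$ is a detail the paper leaves to ``by computation,'' but the argument is identical.
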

 
\begin{proof}
	Consider the path $v_{i_2}\cdots v_{i_l}v_{i_0}v_{i_1}$ of length $l \ge 2$. By Lemma~\ref{lem:type2matrix}, we have 
	$A_{i_2 i_1 } - A_{i_2 i_0} \in \ad^{l - 1}(S_G)$. Next, by computation, we obtain
	$$
	[A_{i_1i_2}, A_{i_2 i_1 } - A_{i_2 i_0}] =  A_{i_2i_1} - A_{i_1i_0}  \in  \ad^l(S_G).
	$$ 
	Further, we have 
	$$
	[A_{i_0i_1}, A_{i_2i_1} - A_{i_1i_0} ] =  A_{i_0 i_1} - A_{i_1 i_0 } \in  \ad^{l + 1}(S_G).
	$$
	This completes the proof.
\end{proof}

\begin{figure}[h]
\begin{center}
	\includegraphics[width = 0.7\textwidth]{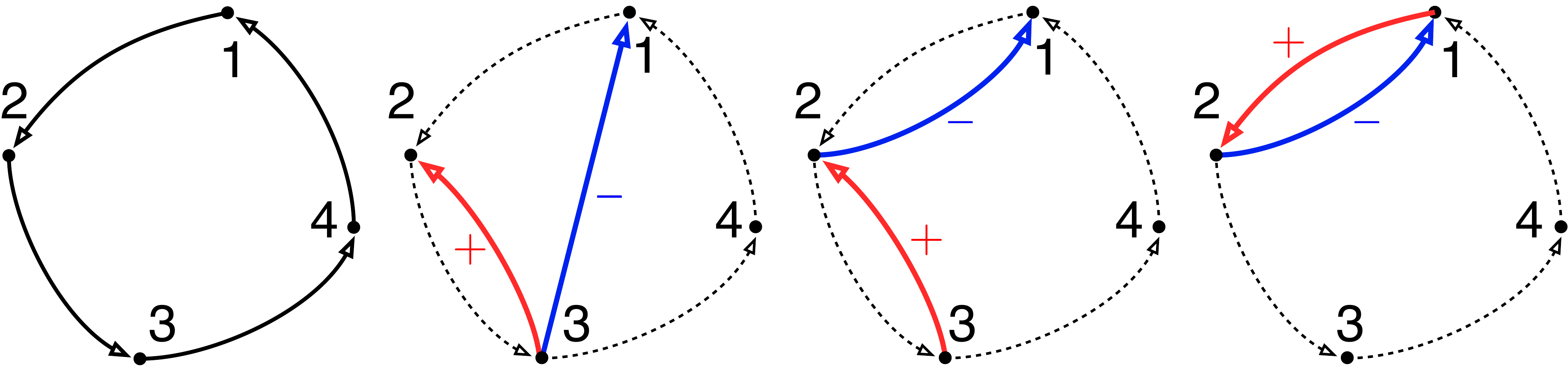}	
	\caption{
	In the figure, $v_1v_2v_3v_4v_1$ is a cycle of length~$4$. The set of primary matrices $\{A_{12}, A_{23}, A_{34}, A_{41}\}$ belongs to $S_G$.
	We show how to generate $A_{12} - A_{21}$ by an iterated matrix commutator of those primary matrices. By Lemma~\ref{lem:type2matrix},   we have $A_{32} - A_{31}\in \ad^2(S_G)$ via the path $v_3v_4v_1v_2$. 
	Next, we have $[A_{23}, A_{32} - A_{31}] = A_{32} - A_{21} \in \ad^3(S_G)$, and finally, $[A_{12}, A_{32} - A_{21}] = A_{12} - A_{21}\in \ad^4(S_G)$.   
	}\label{fig:generator2}
\end{center}
\end{figure}

With the lemmas above, we are now in a position to prove Theorem~\ref{thm:familycodistset}: 

\begin{proof}[Proof of Theorem~\ref{thm:familycodistset}.]
We show that $S^*_{G}\subseteq \ad^l(S_G)$ for some $l \le \d(G)$.       
First, consider the matrix $A_{jk} - A_{kj}$ with $v_j v_k\in E$. If $v_kv_j$  is an edge of $G$, then $A_{jk} - A_{kj} = [A_{kj}, A_{jk}]\in \ad(S_G)$. 
Now, suppose that $v_k v_j$ is not an edge of $G$; then, since $G$ is strongly connected, there exists a cycle $v_{i_0}v_{i_1}\cdots v_{i_{l'}}v_{i_0}$ with $v_{i_0} = v_j$ and $v_{i_1} = v_k$, and the integer~$l'$ satisfies $2 \le l' \le \d(G) - 1$.   
By Lemma~\ref{lem:type1matrix}, 
$A_{jk} - A_{kj} \in \ad^{l}(S_G)$ with $l :=(l' + 1) \le \d(G)$.

We next consider the matrix $A_{ik} - A_{ij}$ with $v_j v_k \in E$. If $v_iv_j$ is an edge of $G$, then $[A_{ij}, A_{jk}] = A_{ik} - A_{ij}\in \ad(S_G)$. 
Now, suppose that $v_iv_j$ is not an edge of $G$; then, there exists a path $v_{i_0} \cdots v_{i_{l}}$ with $v_{i_0} = v_i$ and $v_{i_{l}} = v_j$, and $l$ satisfies $2\le l \le \d(G)$. By Lemma~\ref{lem:type2matrix}, 
$A_{ij} - A_{i,i_{l-1}}\in  \ad^{l-1}(S_G)$. Then, by computation, we have
$$
[A_{jk}, A_{ij} - A_{i,i_{l-1}}] = -(A_{ik} - A_{ij}) \in \ad^{l}(S_G).
$$ 
The above holds regardless of whether $v_{i_{l-1}} = v_k$ or not. By Lemma~\ref{lem:negativethesame}, $A_{ik} - A_{ij}\in \ad^l(S_G)$.  

Finally, by Prop.~\ref{prop:semi-codist}, the set $S^*_G$ is semi-codistinguished to $S_G$. 
We thus conclude that $S^*_G\subset \ad^l(G)$ for some~$l\le \d(G)$ and $S^*_G \subset \ad^m(S_G)$ for all $m \ge l$.  
\end{proof}

\section{Analysis of ensemble formation system}\label{sec:proofofmainresult}

We investigate in the section controllability of an ensemble formation system and establish Theorem~\ref{thm:main}. For convenience, we
reproduce below the dynamics:
\begin{equation}\label{eq:rewrittensystem}
\dot X_\sigma(t) = \sum_{v_iv_j \in E}\sum^r_{s = 1} u_{ij, s}(t)\rho_s(\sigma)A_{ij}X_{\sigma}(t), \quad \sigma\in \Sigma.  
\end{equation}
As was mentioned at the end of Section~\ref{sec:Main}, the proof of Theorem~\ref{thm:main} relies on the use of Lie extension of~\eqref{eq:rewrittensinglesystem}. The technique of Lie extension has been widely used in the proof of approximate controllability of a control-affine system~\cite{sussmann1993lie,liu1997approximation} and in nonholonomic motion planning~\cite{murray1993steering}.  
We now review such a technique below.     

\subsection{Lie extended formation systems}\label{ssec:lieextendedsys}
To proceed, we consider an arbitrary control-affine system as follows: 
\begin{equation}\label{eq:singlecontroaffine}
\dot x(t) = \sum^m_{i = 1}u_i(t)f_i(x(t)),
\end{equation}
where the $u_i(t)$'s are control inputs and the $f_i(x)$'s are control vector fields. 
Let $F:= \{f_i\}^m_{i = 1}$, and $\mathfrak{L}(F)$ be the associated free Lie algebra generated by the set~$F$ where the $f_i$'s are treated as if they were free generators. 
Let ${\cal F}$ be a basis of $\mathfrak{L}(F)$, composed of Lie products of the~$f_i$'s. Recall that for a given~$k\ge 0$, ${\cal F}(k)$ is a subset of ${\cal F}$ composed of all formal Lie products of the $f_i$'s of depth~$k$. 
Then, the Lie extension of system~\eqref{eq:singlecontroaffine} gives rise to a family of control-affine systems as follows: Given a nonnegative integer~$k$, we have the following~$k$th order Lie extended system:   
$$
\dot x(t) = \sum^k_{l = 0}\sum_{f\in {\cal F}(l)} u_f(t) f(x(t)), 
$$ 
where each~$f$ is a formal Lie product of the~$f_i$'s.  Note that the control inputs $u_{f}(t)$ are independent of each other. We have the following well known fact (see~\cite{sussmann1993lie,liu1997approximation} and~\cite{agrachev2016ensemble}):

\begin{lemma}\label{lem:relationship}
System~\eqref{eq:singlecontroaffine} is approximately path-controllable if and only if any of its Lie extended system is.  	
\end{lemma}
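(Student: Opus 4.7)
The plan is to treat the two implications separately. The forward direction is nearly trivial: any trajectory $x[0,T]$ of system~\eqref{eq:singlecontroaffine} corresponding to controls $u_1,\ldots,u_m$ is also a trajectory of any $k$th order Lie extended system, obtained by setting to zero all of the ``virtual'' controls $u_f$ associated with Lie products $f \in {\cal F}(l)$ for $l \ge 1$. Consequently, every smooth target trajectory that~\eqref{eq:singlecontroaffine} can follow within an $\epsilon$-tube can also be followed by the extended system, so approximate path-controllability of~\eqref{eq:singlecontroaffine} immediately implies that of its extensions.

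The reverse direction is the substantive one: given that a Lie extended system of order $k$ is approximately path-controllable, I need to show that the original system~\eqref{eq:singlecontroaffine} is too. The key tool is the classical \emph{commutator approximation formula}, which states that for sufficiently small $\tau > 0$, the concatenation of flows $e^{-\tau f_j} \circ e^{-\tau f_i} \circ e^{\tau f_j} \circ e^{\tau f_i}$ equals $e^{\tau^2 [f_i, f_j]}$ up to an error of order $o(\tau^2)$, and more generally, a nested commutator of depth $k$ can be realized by an appropriately scaled concatenation of flows along the generators $f_i$, with error vanishing faster than $\tau^k$. This allows one to simulate a short piece of flow along a virtual vector field $f \in {\cal F}(l)$ using only the original controls, at the cost of a refined time discretization.

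Using this, I would proceed as follows. Fix a desired trajectory $\hat x[0,T]$, a tolerance $\epsilon > 0$, and let $u^{\rm ext}$ be an admissible control for the Lie extended system steering it within $\epsilon/2$ of $\hat x$. Partition $[0,T]$ into a large number $M$ of subintervals of length $\Delta = T/M$. On each subinterval, the extended control prescribes an increment that is, to leading order, a linear combination $\sum_{f} u_f f(x)$ applied for time $\Delta$. Replace this increment by a rapidly switched sequence of pulses in the original controls $u_1, \ldots, u_m$, with amplitudes and durations chosen via the commutator formula so that the resulting displacement agrees with the extended-system displacement up to an error of order $o(\Delta)$ per subinterval. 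Summing the errors yields a total deviation that tends to zero as $M \to \infty$, so by choosing $M$ large enough the resulting trajectory of~\eqref{eq:singlecontroaffine} stays within $\epsilon$ of $\hat x$.

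The main obstacle I anticipate is bookkeeping: ensuring that the pulse approximation is uniform in the time parameter and in the state along the trajectory. Two points require care. First, the commutator approximation formulas involve remainders depending on bounds on the vector fields and their derivatives in a neighborhood of the trajectory; one needs to pre-localize to a compact tube around $\hat x[0,T]$ so that these bounds are uniform. Second, the pulse construction produces a trajectory that, during each subinterval, excursions away from the nominal path before returning; one must verify that these excursions remain inside the tube where the local bounds hold, which typically requires choosing the switching scale as a function of $\Delta$ (e.g.\ amplitudes of order $\Delta^{-1/k}$ and durations of order $\Delta^{1/k}$ for brackets of depth $k$) and checking that the excursion radius shrinks with $\Delta$. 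Once these estimates are in place, the lemma follows, and I would cite the corresponding results from \cite{sussmann1993lie,liu1997approximation,agrachev2016ensemble} rather than reproducing the full technical argument.
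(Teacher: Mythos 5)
Your two-direction decomposition is sound, and the substance of your sketch matches the literature that the lemma rests on --- but note that the paper itself offers \emph{no} proof of this statement: it is introduced as a ``well known fact'' with citations to \cite{sussmann1993lie,liu1997approximation} and \cite{agrachev2016ensemble}, and the author never reproduces the argument. So the comparison here is between your reconstruction and those references rather than anything in the paper. Your easy direction (zero out the virtual controls $u_f$ for $f\in{\cal F}(l)$, $l\ge 1$) is exactly right, and your hard direction correctly identifies the two issues that make this a theorem about \emph{path}-controllability rather than mere endpoint reachability: uniformity of the remainder estimates over a compact tube around $\hat x[0,T]$, and control of the intra-subinterval excursions so that the oscillatory trajectory itself stays in the tube. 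That second point is precisely the content of Liu's approximation theorem, and flagging it shows you understand why the lemma is not just the Chow--Rashevskii bracket trick applied pointwise. Two small caveats: your scaling heuristic ``amplitudes of order $\Delta^{-1/k}$'' is not the right exponent (for a depth-$k$ bracket realized by pulses of duration comparable to $\Delta$ one needs amplitudes growing like $\Delta^{-k/(k+1)}$, and in the Sussmann--Liu formulation one uses unbounded-amplitude oscillatory controls rather than the four-pulse concatenation), and the lemma also requires handling \emph{simultaneous} activation of many bracket directions on each subinterval, which is where the resonance-frequency bookkeeping of \cite{liu1997approximation} enters; but since you explicitly defer the full technical argument to those references, as the paper does, neither point is a gap in what you have written.
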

 
We now apply the technique of Lie extension to the ensemble formation system~\eqref{eq:rewrittensystem}. First, note that for a given parameter $\sigma\in \Sigma$, the control vector fields of system-$\sigma$ are given by
	$\rho_s(\sigma)A_{ij}X_\sigma$ for  $s = 1,\ldots, r$ and  $v_iv_j \in E$. 
	The Lie bracket of any two of these vector fields is given by
	$$
	[\rho_s(\sigma)A_{ij}X_\sigma, \rho_{s'}(\sigma)A_{i'j'}X_\sigma] = \rho_s(\sigma) \rho_{s'}(\sigma) [A_{i'j'}, A_{ij}]X_\sigma.
	$$
	So, for example, the first order Lie extended system of~\eqref{eq:rewrittensystem} is given by the following ensemble system: For all $\sigma\in \Sigma$,  
	$$
	\dot X_\sigma(t) = \sum_{\alpha} u_{\alpha}(t)\rho_s(\sigma)A_{ij}X_{\sigma}(t)   
	+ \sum_{\beta} u_{\beta}(t)\rho_s(\sigma)\rho_{s'}(\sigma)[A_{ij}, A_{i'j'}] X_\sigma(t), 
	$$     
	where the two summations are over admissible multi-indices $\alpha = (s, v_iv_j)$ and $\beta = (s, s', v_iv_j, v_{i'}v_{j'})$. We make the statement precise below.  
	
	For a general $k$th order Lie extended system of~\eqref{eq:rewrittensystem}, we have the following:   
	First, recall that ${\cal P}(k)$ is the collection of monomials $\prod^r_{s = 1}\rho^{k_s}_s$ of degree~$k =\sum^r_{s = 1}k_s$. We also recall that $S_G$ is the collection of primary matrices $A_{ij}$ with $v_iv_j\in E$. We let $\mathfrak{L}(S_G)$ be the free Lie algebra generated by the set $S_G$ as if the primary matrices were free generators. Let ${\cal S}$ be a subset of $\mathfrak{L}(S_G)$, composed of all formal Lie products of the matrices in $S_G$. Decompose the set ${\cal S} = \sqcup_{k \ge 0}{\cal S}(k)$ where each ${\cal S}(k)$ is composed of formal Lie products of depth~$k$ in~${\cal S}$.  
	Then, the $k$th order Lie extended system can be expressed as follows: For all $\sigma\in \Sigma$,     
	\begin{equation}\label{eq:lieextension}
	\dot X_\sigma(t) = \sum^k_{l = 0}\sum_{\rp \in {\cal P}(l + 1)}\sum_{A\in {\cal S}(l)} u_{A,\rp}(t) \rp(\sigma) AX_\sigma(t), 
	\end{equation}
	where each $A$ on the right-hand side of the above expression is a formal Lie product of the $A_{ij}$'s in~$S_G$.

	To proceed, we recall that by Theorem~\ref{thm:familycodistset}, the set $S^*_G$ defined in~\eqref{eq:defsG} spans $\A^*$. For convenience, we let $$\gamma := \dim \A^*  = N(N - 1) - 1,$$ 
 	and let $\{A^*_i\}^\gamma_{i = 1}$ be any subset of $S^*_G$ such that the matrices $A^*_i$ form a basis of $\A^*$. 
	By the same theorem, we also have that $S^*_G \subset \ad^{l}(S_G)$ for any $l \ge \d(G)$.   
	Thus, there exists a subset of formal Lie products in $\ad^l(S_G)$ such that if one evaluates these formal Lie products (i.e., one computes the iterated matrix commutators), then the set of the resulting matrices contains~$S^*_G$ as a subset.    
 	Let $\{A^{[l]}_i\}^\gamma_{i= 1}$, for $l \ge \d(G)$, be such a subset out of $\ad^{l}(S_G)$: 
 	\begin{equation}\label{eq:keytotheresult}
 	A^{[l]}_i = A^*_i, \quad \forall i = 1,\ldots, \gamma.  
 	\end{equation}
 	Let ${\cal S}^*$ be a subset of ${\cal S}$ composed of the formal Lie products $A_i^{[l]}$ for all $i = 1,\ldots, \gamma$ and for all $l \ge \d(G)$, i.e., ${\cal S}^*:= \{A^{[l]}_i \mid 1\le i \le \gamma, \,\, l \ge \d(G)\}$.

 	Now, we fix a positive integer $k \ge \d(G)$ and consider the~$k$th order Lie extended system~\eqref{eq:lieextension}.  Let a control input $u_{A, \rp}(t)$ be identically zero if the formal Lie product~$A$  does not belong to the subset ${\cal S}^*$ defined above. Then, by~\eqref{eq:keytotheresult}, system~\eqref{eq:lieextension} can be reduced to the following: For all $\sigma\in \Sigma$, 
 	\begin{equation}\label{eq:thelastsystem}
 	\dot X_\sigma(t) = \sum^\gamma_{i = 1} 
 	\left [\sum^k_{l = \d(G)}\sum_{\rp \in {\cal P}(l + 1)} u_{i, \rp}(t) \rp(\sigma) \right ]
 	 A^*_iX_\sigma(t).
 	\end{equation}

 	We are now in a position to state the main result of the section.   Recall that the subset $Q\subset P$ is composed of all nondegenerate configurations. By Lemma~\ref{lem:resultforsetQ}, if $N > (n + 1)$, then $Q$ is path-connected, open, and dense in $P$. 
 	We now have the following fact: 

	\begin{proposition}\label{prop:controllabilityoflieextended}
		Let $\hat X_\Sigma[0,T]$ be a smooth trajectory with $\hat X_\sigma(t)\in Q$ for all $(t,\sigma)\in [0,T]\times \Sigma $. Suppose that the assumption of Theorem~\ref{thm:main} is satisfied; then, there exist a positive integer~$k \ge \d(G)$ and smooth control inputs $u_{i, \rp}:[0,T]\to \R$ for system~\eqref{eq:thelastsystem} such that for any $\sigma\in \Sigma$,  the trajectory $X_\sigma[0,T]$ generated by the system, with $X_\sigma(0)\in Q$  and $\|X_\sigma(0) - \hat X_\sigma(0)\| < \epsilon$, satisfies: 
		\begin{equation}\label{eq:lastcondition}
		\|X_\sigma(t) - \hat X_\sigma(t)\| < \epsilon, \quad \forall (t,\sigma)\in [0,T] \times \Sigma.
		\end{equation}
	\end{proposition}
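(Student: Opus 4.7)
The plan is to reduce the proposition to two facts: that the target velocity $\dot{\hat X}_\sigma(t)$ can be represented as a time-and-parameter-dependent linear combination of the vector fields $A^*_i X$, and that each such coefficient function can be uniformly approximated by finite linear combinations of the monomials $\rp(\sigma)$ with $t$-dependent coefficients. Concretely, since $\hat X_\sigma(t)$ lies in the open set $Q$ of nondegenerate configurations for every $(t,\sigma)\in [0,T]\times\Sigma$, I intend to invoke the forthcoming Prop.~\ref{prop:fullrankLX} to guarantee that $\{A^*_i \hat X_\sigma(t)\}_{i=1}^\gamma$ spans $\R^{N\times n}$ (this is where the assumption $N>n+1$ enters). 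A smoothly varying right inverse, obtained for instance from the Moore--Penrose pseudo-inverse of the $Nn\times \gamma$ matrix whose columns are the vectorised $A^*_i \hat X_\sigma(t)$, then produces smooth functions $c_i:[0,T]\times\Sigma\to\R$ with
$$
\dot{\hat X}_\sigma(t)=\sum_{i=1}^\gamma c_i(t,\sigma)\,A^*_i\hat X_\sigma(t).
$$

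Next, I would invoke the Stone--Weierstrass theorem on the compact product $[0,T]\times\Sigma$. The set of monomials ${\cal P}$ is closed under multiplication, contains the constant function $1=\rho_1^0$, and spans a subalgebra of $C(\Sigma,\R)$ that separates points by hypothesis; consequently, $\mathrm{span}({\cal P})$ is uniformly dense in $C(\Sigma,\R)$, and by a standard tensor-product density argument $C^\infty([0,T])\otimes \mathrm{span}({\cal P})$ is uniformly dense in $C([0,T]\times\Sigma,\R)$. For any prescribed tolerance $\eta>0$, I can then choose a finite collection of monomials (of some maximum degree $k+1$, which dictates the value of $k\ge \d(G)$) and smooth coefficients $u_{i,\rp}:[0,T]\to\R$ such that
$$
\max_{1\le i\le \gamma}\sup_{(t,\sigma)\in[0,T]\times\Sigma}\,\Bigl|\,c_i(t,\sigma)-\sum_{\rp}u_{i,\rp}(t)\rp(\sigma)\Bigr|<\eta.
$$
These $u_{i,\rp}$ are the control inputs I feed into the Lie extended system~\eqref{eq:thelastsystem}.

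With this choice in place, the generated trajectory $X_\sigma(\cdot)$ satisfies $\dot X_\sigma(t)=\sum_i\bigl(c_i(t,\sigma)+\delta_i(t,\sigma)\bigr)A^*_i X_\sigma(t)$ with $\|\delta_i\|_\infty<\eta$. Subtracting the exact expression for $\dot{\hat X}_\sigma(t)$ and exploiting linearity in the state, the error $E_\sigma(t):=X_\sigma(t)-\hat X_\sigma(t)$ satisfies a time-varying linear ODE whose homogeneous coefficient is uniformly bounded in $(t,\sigma)$ (because $\hat X_\sigma$, $c_i$, and the $u_{i,\rp}\rp$ are all uniformly bounded on $[0,T]\times\Sigma$ by compactness) and whose inhomogeneous term has $O(\eta)$ magnitude uniformly in $\sigma$. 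Gronwall's inequality then yields $\sup_{(t,\sigma)}\|E_\sigma(t)\|\le C_T\bigl(\sup_\sigma\|E_\sigma(0)\|+\eta\bigr)$ for a constant $C_T$ independent of $\sigma$; taking $\eta$ small enough and using the standing assumption $\|X_\sigma(0)-\hat X_\sigma(0)\|<\epsilon$ at the initial time delivers~\eqref{eq:lastcondition}.

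The main obstacle will be in the first step: producing $c_i(t,\sigma)$ that is jointly smooth and uniformly bounded across $\sigma\in\Sigma$, despite the severe non-uniqueness ($\gamma=N(N-1)-1$ generally greatly exceeds $Nn$). The Moore--Penrose pseudo-inverse supplies a canonical smooth choice on the open set where the map has constant (full) rank, and the required constant-rank property on the image of $\hat X_\Sigma[0,T]$ will follow from Prop.~\ref{prop:fullrankLX} together with compactness of $\Sigma$. All subsequent estimates depend on uniformity in $\sigma$, which is why sup-norm density from Stone--Weierstrass---rather than the weaker $\mathrm{L}^1$-density used in~\cite{agrachev2016ensemble}---is the natural tool here.
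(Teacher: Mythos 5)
Your overall architecture (represent $\partial_t$ of the target as a combination of the $A^*_i X$, approximate the coefficients by Stone--Weierstrass, close with Gronwall) matches the paper's, but two steps fail as written. First, the Stone--Weierstrass step ignores which monomials are actually admissible in system~\eqref{eq:thelastsystem}: the inner sum there runs over $\rp\in{\cal P}(l+1)$ with $l\ge \d(G)$, so every usable monomial has degree at least $\d(G)+1$. Your argument approximates $c_i(t,\sigma)$ by elements of the full algebra generated by $\{\rho_s\}$, including the constant $\rho_1^0$ and other low-degree monomials, none of which correspond to a control channel in~\eqref{eq:thelastsystem}. The span of the high-degree monomials alone is not an algebra containing constants, so Stone--Weierstrass does not apply to it directly; this is exactly where the everywhere-nonzero hypothesis on $\rho_1$ (which you never invoke) is needed. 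The paper's fix is to approximate $\rho_1^{-\d(G)-1}(\sigma)\,c_i(t,\sigma)$ by arbitrary monomials and then multiply through by $\rho_1^{\d(G)+1}$, pushing every degree above $\d(G)$ at the cost of a factor $\kappa=\max_\sigma|\rho_1^{-\d(G)-1}(\sigma)|$ in the error.

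Second, your final Gronwall estimate $\sup_{(t,\sigma)}\|E_\sigma(t)\|\le C_T\bigl(\sup_\sigma\|E_\sigma(0)\|+\eta\bigr)$ does not deliver~\eqref{eq:lastcondition}: the initial error is only known to be $<\epsilon$, and $C_T>1$ in general, so the right-hand side exceeds $\epsilon$ no matter how small you take $\eta$. The paper avoids this by never tracking $\hat X_\Sigma$ with a nonzero initial error: it first chooses an auxiliary smooth trajectory $\tilde X_\Sigma[0,T]$ in $Q$ with $\tilde X_\Sigma(0)=X_\Sigma(0)$ and $\|\tilde X_\sigma(t)-\hat X_\sigma(t)\|<\epsilon'$ for some $\epsilon'<\epsilon$ (possible by openness of $Q$ and compactness of $[0,T]\times\Sigma$), then applies your coefficient-matching and Gronwall argument to $\tilde X_\Sigma$ with \emph{zero} initial error, making the tracking error smaller than $\epsilon-\epsilon'$, and concludes by the triangle inequality. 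You need this intermediate trajectory (or the observation that you may simply take $X_\sigma(0)=\hat X_\sigma(0)$, which you do not make); otherwise the last line of your argument is false. The remaining ingredients --- the pseudo-inverse construction of smooth, uniformly bounded $c_i$ from Prop.~\ref{prop:fullrankLX}, and the uniform-in-$\sigma$ Gronwall bound --- are sound and consistent with the paper.
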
	
 
	Theorem~\ref{thm:main} then follows from Lemma~\ref{lem:relationship} and the above proposition. The remainder of the section is devoted to the proof of Prop.~\ref{prop:controllabilityoflieextended}. 
	A critical constituent of the proof is to show that for each $X\in Q$,  the linear span of $\{A^*_i X\}^\gamma_{i = 1}$ is the entire Euclidean space $\R^{N\times n}$ (which can be viewed as the tangent space of~$Q$ at any~$X$). 
	We prove such a fact in the next subsection.

\subsection{Linear span of control vector fields}\label{ssec:linearspanofcontrolvecfield}
We consider here an auxiliary {\em single} control-affine system associated with~\eqref{eq:thelastsystem}:   
$$
\dot X(t) = \sum^\gamma_{i = 1} u_i(t) A^*_i X(t),
$$
where the matrices $\{A^*_i\}^\gamma_{i = 1}$ were defined in the previous subsection; they were chosen out of the set $S^*_G$ and form a basis of $\A^*$. We illustrate the vector fields $A^*_iX$ in Fig.~\ref{fig:vectorfield}.     

 	\begin{figure}[h]
 	\begin{center}
 		\includegraphics[width = 0.65\textwidth]{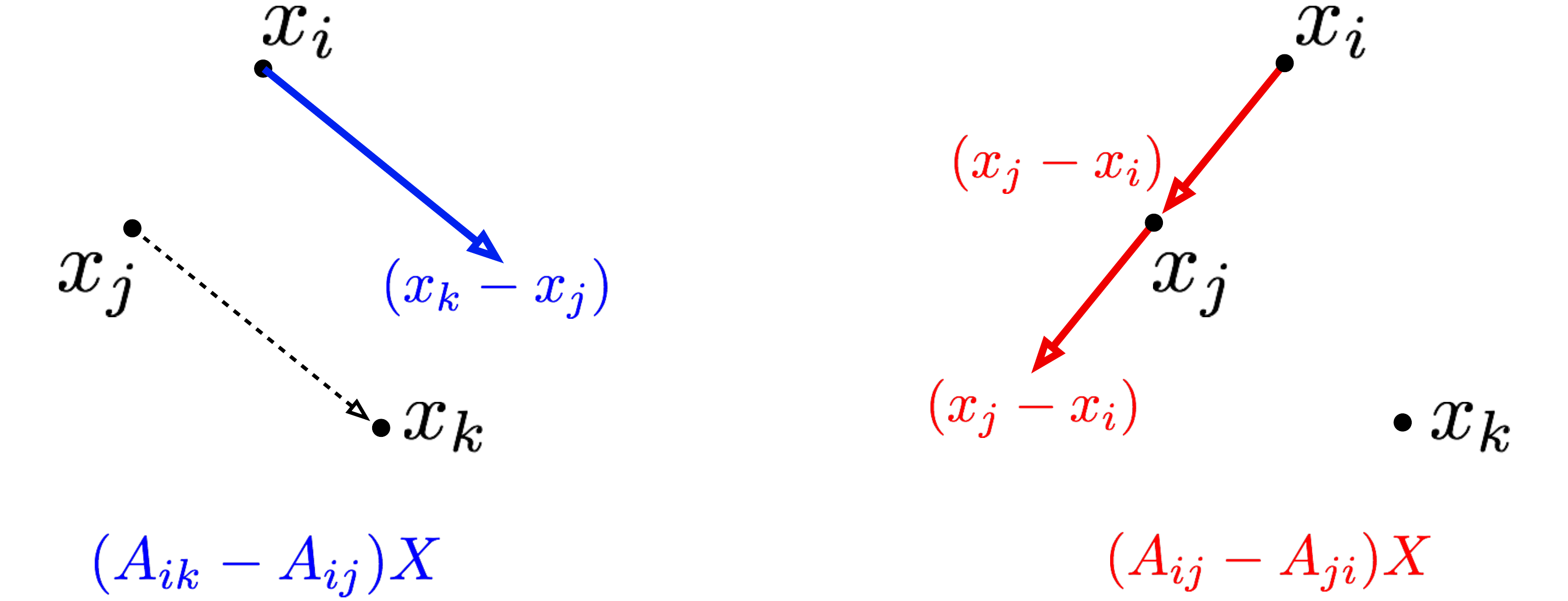}
 		\caption{We illustrate $A^*_i X$ by describing the infinitesimal motions of the~$N$ agents. Since the matrices $A^*_{i'}$'s are chosen out of $S^*_G$, either $A^*_{i'} = A_{ik} - A_{ij}$ or $A^*_{i'} = A_{ij} - A_{ji}$. If $A^*_{i'} = A_{ik} - A_{ij}$, then only agent $x_i$ has nonzero infinitesimal motion given by~$(x_k - x_j)$. If $A^*_{i'} = A_{ij} - A_{ji}$, then agents $x_i$ and $x_j$ have nonzero infinitesimal motions given by~$(x_j - x_i)$.}\label{fig:vectorfield}
 		\end{center}
 	\end{figure}

We show that the above system is {\em exactly} path-controllable over~$Q$ by proving the fact that if $X\in Q$, then $\{A^*_i X\}^\gamma_{i = 1}$ spans the entire Euclidean space~$\R^{N\times n}$($\approx T_XQ$).    
For a matrix $X\in Q$, we define a subspace $\bbL^*_{X}$ of $\R^{N\times n}$ as follows: 
$$
\bbL^*_X := \{AX\mid A\in \A^*\}. 
$$ 
We establish below the following fact: 

\begin{proposition}\label{prop:fullrankLX}
	If $ N > (n + 1)$ and $X\in Q$, then $\bbL^*_X  = \R^{N\times n}$.  
\end{proposition}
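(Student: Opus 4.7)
The plan is to carry out a direct dimension count for the linear map $\Phi \colon \A^* \to \R^{N\times n}$ defined by $\Phi(A) := AX$. Its image is exactly $\bbL^*_X$, and $\dim \A^* = N(N-1) - 1$, so by rank-nullity it suffices to establish $\dim \ker \Phi = N(N-n-1) - 1$; this gives $\dim \bbL^*_X = Nn$ and hence the desired equality with $\R^{N\times n}$.

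The kernel of $\Phi$ consists of matrices $A \in \R^{N\times N}$ satisfying $A\1 = 0$, $AX = 0$, and $\tr(A) = 0$. I would combine the first two conditions as $A \cdot [\1 \ X] = 0$, so that each row of $A$ is forced to lie in the subspace $U := (\R\1 + \operatorname{col}(X))^\perp \subseteq \R^N$, where $\operatorname{col}(X)$ denotes the column span of $X$. The key structural input to verify first is that $X \in Q$ forces $\1 \notin \operatorname{col}(X)$: otherwise $\1 = Xv$ for some $v \in \R^n$ would give $x_i^\top v = 1$ for all $i$, placing every $x_i$ on an affine hyperplane of $\R^n$ of affine dimension $n-1$, contradicting nondegeneracy. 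Hence $\dim(\R\1 + \operatorname{col}(X)) = n + 1$ and $\dim U = N - n - 1$, so the subspace of matrices with every row in $U$ has dimension $N(N-n-1)$.

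The final step is to show that the trace functional restricts to a nonzero linear functional on this subspace. Writing $\tr(A) = \sum_i e_i^\top A_i$, where $A_i$ denotes the $i$-th row of $A$, the functional vanishes identically only if $e_i \in U^\perp = \R\1 + \operatorname{col}(X)$ for every $i$; but this would force $U^\perp = \R^N$ and hence $N = n + 1$, contradicting the hypothesis $N > n + 1$. Therefore the trace constraint is independent of the row constraints and cuts the dimension by exactly one, yielding $\dim \ker \Phi = N(N-n-1) - 1$ as required.

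The main delicate point in the argument is the implication from nondegeneracy of $X$ to $\1 \notin \operatorname{col}(X)$; the rest is a routine linear-algebra dimension count, with the hypothesis $N > n + 1$ entering decisively only when verifying that the trace condition is independent of the row-constraint conditions.
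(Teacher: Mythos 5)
Your proof is correct, and it takes a genuinely different route from the paper's. You argue by a global dimension count: the kernel of $A\mapsto AX$ on $\A^*$ is the set of trace-zero matrices whose rows annihilate $[\1\;X]$, and since nondegeneracy of $X$ forces $\operatorname{rank}[\1\;X]=n+1$ (you should note explicitly that nondegeneracy also gives $\operatorname{rank}X=n$, since the differences $x_i-x_j$ lie in $\span\{x_k\}$; this is needed for $\dim(\R\1+\operatorname{col}(X))=n+1$, though it is immediate), the row constraints cut out a space of dimension $N(N-n-1)$, and the hypothesis $N>n+1$ is exactly what makes the trace functional nontrivial on that space, so rank–nullity gives $\dim\bbL^*_X=Nn$. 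The paper instead proceeds constructively: it invokes the earlier result that $\bbL_{X'}=\R^{(N-1)\times n}$ for the zero-row-sum algebra on a nondegenerate subconfiguration $X'$ of $N-1$ agents (Lemma~\ref{lem:simplecase}), exhibits $n(N-1)$ matrices of the form $(A_{ij}-A_{Nk})X$ whose top blocks span $\R^{(N-1)\times n}$, and then adds $n$ matrices $(A_{N,n+1}-A_{Nk})X$ supported on the last row, using an $n$-simplex inside $X'$ to get independence. Your argument is shorter, self-contained (it does not rely on Lemma~\ref{lem:simplecase} or on the earlier paper), and isolates cleanly where $N>n+1$ enters, namely in the independence of the trace constraint from the row constraints; the paper's argument, at the cost of more bookkeeping, produces an explicit spanning family built from differences of primary matrices, which fits its broader narrative of realizing generators of $\A^*$ as iterated commutators of the $A_{ij}$. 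Since the downstream use of the proposition only requires $\span\{A^*_iX\}=\bbL^*_X=\R^{N\times n}$ for a basis $\{A^*_i\}$ of $\A^*$, your proof fully suffices for the role the proposition plays in the paper.
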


To prove the above proposition, we first recall a  fact established in~\cite{chen2017controllability}. Define a subspace of $\R^{N\times n}$ as follows:  
$$\bbL_X := \{AX\mid A\in \A\}.$$ 
Since $\A^*\subsetneq \A$, we have $\bbL^*_X \subseteq \bbL_X$.  
We established in~\cite{chen2017controllability} the following fact as a weaker version of Prop.~\ref{prop:fullrankLX}:  

\begin{lemma}\label{lem:simplecase}
	If $N > n$ and $X\in Q$, then $\bbL_X = \R^{N\times n}$.   
\end{lemma}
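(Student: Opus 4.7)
The plan is to show directly that every $N\times n$ matrix can be written as $AX$ for some $A\in \A$, by exploiting the fact that the zero-row-sum constraint $A\1 = 0$ couples only the entries within each row, leaving the rows of $A$ independently choosable. Concretely, I would fix an arbitrary target $Y = [y_1^\top;\cdots;y_N^\top]\in \R^{N\times n}$ and attempt to solve $AX = Y$ one row at a time. For the $i$-th row, the equation reads $\sum_{j} a_{ij}x_j = y_i$, and substituting $a_{ii} = -\sum_{j\neq i} a_{ij}$ (forced by the row-sum constraint) rewrites it as
\begin{equation*}
\sum_{j\neq i} a_{ij}(x_j - x_i) = y_i.
\end{equation*}

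The next step is to invoke nondegeneracy: since $X\in Q$, the vectors $\{x_j - x_i \mid j\neq i\}$ span $\R^n$ for each index $i$ (by Definition~\ref{def:nondegenerate}, taking $i$ as the base index). Hence the displayed linear system in the $N-1$ unknowns $\{a_{ij}\}_{j\neq i}$ is solvable for an arbitrary $y_i\in \R^n$, and once solved we set $a_{ii} := -\sum_{j\neq i} a_{ij}$ to close the row while preserving $\sum_j a_{ij}=0$. Carrying out this construction independently for $i=1,\ldots,N$ assembles a matrix $A\in \A$ realising $AX = Y$, which yields $\bbL_X = \R^{N\times n}$.

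The hypothesis $N > n$ plays only an implicit role: nondegeneracy already forces $N-1 \ge n$ in order for the $N-1$ difference vectors to span $\R^n$, so the assumption merely ensures that $Q$ is nonempty and that each row-wise linear system is underdetermined (or square). There is no real obstacle here; the whole argument reduces to a row-wise solvability check, which is immediate from the spanning property that is built into the definition of $Q$. The subtler fact will be Proposition~\ref{prop:fullrankLX} itself, where the additional trace-zero constraint removes one degree of freedom and the hypothesis must be strengthened to $N > n+1$ to compensate.
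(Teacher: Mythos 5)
Your argument is correct, and it is essentially the paper's own approach (which is only sketched here, with the full proof deferred to \cite{chen2017controllability}): since $A_{ij}X = e_i(x_j-x_i)^\top$ is supported on row $i$, the paper's construction of $nN$ linearly independent matrices $A_{ij}X$ is exactly the observation that, for each fixed $i$, the vectors $\{x_j-x_i\}_{j\neq i}$ span $\R^n$ and the rows decouple --- which is precisely your row-wise solvability of $\sum_{j\neq i}a_{ij}(x_j-x_i)=y_i$. Your surjectivity phrasing and the paper's basis-counting phrasing are two renderings of the same computation, and your remarks on the role of $N>n$ are accurate.
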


The proof of the lemma is built upon the construction of a set of $nN$ linearly independent vectors~$A_{ij} X$'s where each $A_{ij}$ is a primary matrix.  However, note that any primary matrix $A_{ij}$ has trace~$-1$, and hence does not belong to~$\A^*$. In the proof of Prop.~\ref{prop:fullrankLX}, we will construct a new set of matrices $A^*_i$'s in $\A^*$ so that the $A^*_iX$'s span $\R^{N\times n}$.

Also, by comparing Prop.~\ref{prop:fullrankLX} with Lemma~\ref{lem:simplecase}, we see that if $N > (n + 1)$ and $X\in Q$, then the equality $\bbL^*_X = \bbL_X = \R^{N\times n}$ holds. On the other hand, we note here that if $N = n + 1$ and $X\in Q$, then $\bbL^*_X \subsetneq \bbL_X$. To see this, first note  that 
	$$\dim \A^* = N(N - 1) - 1 = n(n + 1) - 1.$$ It then follows that 
	$$\dim \bbL^*_X \le \dim \A^* = n(n + 1) - 1 < n(n+1) = \dim \bbL_X.$$  
	In other words, the condition $N > (n + 1)$ is also {\em necessary} for the equality $\bbL^*_X = \bbL_X = \R^{N\times n}$ to hold. 	
With Lemma~\ref{lem:simplecase} as a preliminary result, we now prove Prop.~\ref{prop:fullrankLX}:

\begin{proof}[Proof of Prop.~\ref{prop:fullrankLX}.]
	Since the configuration $X\in Q$ is nondegenerate and $N \ge (n + 2)$, by Remark~\ref{rmk:nondegenerate} there exists a subset of $(N - 1)$ agents such that the subconfiguration formed by these $(N - 1)$ agents is nondegenerate in~$\R^n$. Without loss of generality, we assume that the subset is composed of the first $(N - 1)$ agents $\{x_i\}^{N - 1}_{i = 1}$. 
	
	We now introduce a subset $L^*_X$ of $\bbL^*_{X}$, and show that $L^*_X$ spans  $\R^{N\times n}$. The subset of interest is defined as follows:
	$$
	L^*_X:= \{ (A_{ij} - A_{Nk}) X \mid  
	1\le i\neq j \le N-1, 1\le k \le N - 1\}.
	$$ 
	Note that the leading $(N - 1)\times (N - 1)$ principal minor of $(A_{ij} - A_{Nk})$ is simply a primary matrix. Moreover, the collection of all such principal minors is the set $S_{\K'}$ where $\K'$ is the complete subgraph induced by the first $(N - 1)$ vertices $\{v_i\}^{N - 1}_{i = 1}$.  
	 
	Let $X':= 
	 [
	 x^\top_1; \cdots ;  x^\top_{N - 1}
	 ]
	 \in \R^{(N - 1)\times n}$ be the subconfiguration formed by the first $(N - 1)$ agents. Correspondingly, we partition a matrix $Y\in L^*_X$ as follows: 
	\begin{equation}\label{eq:submatrix}
	Y = 
	\begin{bmatrix}
	Y ' \\ 
	y^\top 	
	\end{bmatrix},
	\quad  Y' \in \R^{(N -1) \times n} \mbox{ and } y\in \R^{n}. 
	\end{equation}
	By the above arguments, we can write $Y' = A'X'$ for $A'\in S_{\K'}$. From the definition of $L^*_{X}$, the collection of all such $Y'$ is $\{A'X' \mid A'\in S_{\K'}\}$. 
	
	Denote by $\A'$ the stochastic Lie algebra composed of all $(N - 1)\times (N - 1)$ zero-row-sum matrices. Because $S_{\K'}$ spans $\A'$, the span of all the $Y'$ is given by $\bbL_{X'}:=\{A' X' \mid A'\in \A'\}$.  
	Furthermore, since $(N - 1)> n$,  by Lemma~\ref{lem:simplecase}, 
	$\dim \bbL_{X'} = n(N - 1)$, i.e., $\bbL_{X'} = \R^{(N - 1)\times n}$. By the above arguments, we know that there exists a subset $L'_X$ of $L^*_X$,  composed of 
	$n(N - 1)$ matrices $\{Y_i\}^{n(N - 1)}_{i = 1}$, such that the sub-matrices $\{Y'_i\}^{n(N - 1)}_{i= 1}$ (obtained by the partition~\eqref{eq:submatrix}) form a basis of $\bbL_{X'}$. We fix any such subset $L'_X$.

	It now suffices to find another~$n$ matrices in $\bbL^*_X$ such that they together with the $Y_i$'s span~$\R^{N\times n}$. Since the subconfiguration $X'$ formed by the first $(N - 1)$ agents is nondegenerate and $N > n + 1$, by Remark~\ref{rmk:nondegenerate} there exist $(n + 1)$ agents out of $\{x_i\}^{N - 1}_{i = 1}$ such that they form an $n$-simplex. Without loss of generality, we assume that the $(n + 1)$ agents are $\{x_i\}^{n + 1}_{i = 1}$. Then, we define a subset of $\bbL^*_X$ as follows:
	$$
	L''_X:= \{(A_{N,n+1} - A_{Nk})X \mid 1 \le k \le  n \}.  
	$$ 
	Note that each matrix in the above set is an integer combination of the matrices in $L^*_X$; indeed, we have
	$$
	(A_{N,n+1} - A_{Nk})X = (A_{1j} - A_{Nk}) X- (A_{1j} - A_{N,n+1})X,  
	$$ 
	for some  $j \in \{2,\ldots, N - 1\}$.  
	It should be clear that there are~$n$ matrices in~$L''_X$. 
	
	We show below that the union $L'_X\cup L''_X$ spans $\R^{N\times n}$. 
	First, by computation, we have 
	$$
	Z_k:=(A_{N, n + 1} - A_{Nk}) X = 
	 \begin{bmatrix}
	 	{\bf 0}_{(N -1) \times n} \\
	 	(x_{n + 1} - x_k)^\top 
	 \end{bmatrix}, 
	$$ 
	for all $k = 1,\ldots, n$.
	Since the agents $\{x_k\}^{n + 1}_{k = 1}$ form an $n$-simplex, by Def.~\ref{def:nondegenerate}, the~$n$ vectors $\{x_{n + 1} - x_k\}^{n}_{k = 1}$ span $\R^n$. Thus, $\{Z_k\}^n_{k = 1}$ are linearly independent. 
	
	Furthermore, by the construction, the matrices  $\{Z_k\}^n_{k = 1}$ are linearly independent of the matrices $\{Y_i\}^{n(N - 1)}_{i = 1}$; indeed, the sub-matrices $Y'_i$ of $Y_i$ form a basis of $\bbL_{X'}$ while the corresponding  sub-matrices of the $Z_k$'s are all zeros as shown in the above computation.  
	 So, there are $nN$ linearly independent matrices in the union $L'_X \cup L''_X$. We thus conclude that    
	$\span(L'_X \cup L''_X) = \span(L^*_{X}) = \R^{N\times n}$ for any $X\in Q$. 
\end{proof}

\subsection{Controllability of Lie extended formation systems}\label{ssec:controllabilityoflieextsys} 
	We prove here Prop.~\ref{prop:controllabilityoflieextended}. Recall that $\hat X_\Sigma[0,T]$ is the desired trajectory we want the Lie extended formation system~\eqref{eq:thelastsystem} to approximate. 
	Since the initial condition $X_\Sigma(0)$ may differ from $\hat X_\Sigma(0)$, we choose a smooth trajectory $\tilde X_\Sigma[0,T]$ such that $\tilde X_\Sigma(0) = X_\Sigma(0)$ and 
	\begin{equation}\label{eq:firststepapproximation}
	\|\tilde X_\sigma(t) - \hat X_\sigma(t)\| < \epsilon, \quad \forall (t, \sigma)\in [0,T]\times \Sigma.
	\end{equation}
	Because $X_\sigma(0)\in Q$ for all $\sigma\in \Sigma$, $\hat X_\sigma(t)\in Q$ for all $(t, \sigma)\in [0,T]\times \Sigma$, and $Q$ is open, we can choose $\tilde X_\Sigma[0,T]$ such that each $\tilde X_\sigma(t)$, for $(t, \sigma)\in [0,T]\times \Sigma$,  belongs to~$Q$ as well.  In the case where $X_\Sigma(0) = \hat X_\Sigma(0)$, we can simply let $\tilde X_\Sigma[0,T] = \hat X_\Sigma[0,T]$. 
	
	Next, we consider the time-derivative $\partial \tilde X_\sigma(t)/\partial t\in \R^{N\times n}$. Since $\tilde X_\sigma(t)$ belongs to $Q$, by Prop~\ref{prop:fullrankLX}, we have that  $$\span\{A^*_i \tilde X_\sigma(t)\}^\gamma_{i = 1} = \R^{N\times n}, \quad \forall(t,\sigma)\in [0,T]\times \Sigma.$$ 
	In particular, there exists a set of smooth functions $\{c_i(t, \sigma)\}^\gamma_{i = 1}$ in both~$t$ and~$\sigma$ such that 
	$$
	\frac{\partial }{\partial t} \tilde X_\sigma(t) = \sum^\gamma_{i = 1} c_i(t, \sigma) A^*_i\tilde X_\sigma(t).
	$$   
	Suppose, for the moment, that for any positive number~$\delta$, there exist a positive integer~$k$ for~$k\ge \d(G)$ and a set of smooth control inputs $u_{i, \rp}(t)$ for the $k$th order Lie extended formation system~\eqref{eq:thelastsystem} such that 
	\begin{equation}\label{eq:approximateci}
	\left |\sum^k_{l = \d(G)}\sum_{\rp\in {\cal P}(l + 1)} u_{i, \rp}(t)\rp(\sigma) -  c_i(t, \sigma) \right | < \delta, \quad
	\forall (t,\sigma)\in [0,T]\times \Sigma \, \mbox{and} \, \forall i = 1,\ldots, \gamma; 
	\end{equation} 
	then, it follows that for all~$\sigma\in \Sigma$, the trajectory $X_\sigma[0,T]$ generated by system~\eqref{eq:thelastsystem}, with $X_\sigma(0) = \tilde  X_\sigma(0)$, can be made arbitrarily close to~$\tilde X_\sigma[0,T]$.  
	
	Note that if the above holds, then  Prop.~\ref{prop:controllabilityoflieextended} will be established. To see this, first note that $[0,T]\times \Sigma$ is compact.  By~\eqref{eq:firststepapproximation}, there exists an $\epsilon'$, with $0 < \epsilon' < \epsilon$, such that if we replace $\epsilon$ with $\epsilon'$, then ~\eqref{eq:firststepapproximation} still holds. 
	Next, we let $\delta$ be chosen sufficiently small so that if~\eqref{eq:approximateci} holds, then 
	$$\|X_\sigma(t) - \tilde X_\sigma(t) \| < \epsilon - \epsilon', \quad \forall (t, \sigma)\in [0,T]\times \Sigma.$$ Combining the above arguments, together with the triangle inequality, we obtain that 
	\begin{multline*}
	\|X_\sigma(t) - \hat X_\sigma(t)\| \le \|X_\sigma(t) - \tilde X_\sigma(t)\| + \|\tilde X_\sigma(t) - \hat X_\sigma(t)\| \\ < (\epsilon - \epsilon') + \epsilon' = \epsilon, \quad \forall (t, \sigma)\in [0,T]\times \Sigma,  
	\end{multline*}      
	and hence~\eqref{eq:lastcondition} holds.

	It now remains to show that for any given $\delta > 0$, there exist a positive integer $k \ge \d(G)$ and smooth control inputs $u_{i, \rp}[0,T]$ such that~\eqref{eq:approximateci} holds. We establish the fact below.

	By assumption, $\{\rho_s\}^r_{s = 1}$ separates points and 
	contains an  everywhere nonzero function. Without loss of generality, we let $\rho_1$ be such a function. Note, in particular, that $\rho^{-1}_1$ is defined. 
	Because $\Sigma$ is compact, by the Stone-Weierstrass theorem, the subalgebra generated by $\{\rho_s\}^r_{s = 1}$ is dense in the algebra of all integrable functions defined on~$\Sigma$. 	 
	Thus, given any $\delta' > 0$ and any $i = 1,\ldots, \gamma$, there exist
	\begin{enumerate}
	\item[\em 1)] 	a nonnegative integer~$k'_i$, 
	\item[\em 2)] 	a subset of monomials $\{\rp'_{i_j}\}^{m_i}_{j = 1}$ out of~$\sqcup^{k'_i}_{l = 0}{\cal P}(l)$, and 
	\item[\em 3)] a set of smooth control inputs $\{u_{i_j}(t)\}^{m_i}_{j = 1}$, 
	\end{enumerate}  
such that the following holds: 
	\begin{equation}\label{eq:withindelta}
	\left | \sum^{m_i}_{j = 1}u_{i_j}(t)\rp'_{i_j}(\sigma)  - \rho^{-\d(G) -1}_1(\sigma) c_i(t,\sigma) \right |< \delta', 
	\quad\forall (t, \sigma) \in [0, T]\times \Sigma.
	\end{equation}
	Because~$\Sigma$ is compact and $\rho_1$ is everywhere nonzero, $$
	\kappa:= \max\left \{ \left |\rho^{-\d(G) - 1}_1(\sigma) \right | \mid \sigma\in \Sigma \right \}$$ 
	exists and is strictly positive. 
	Now, let $$\delta':= \delta/\kappa \quad  \mbox{and}  \quad \rp_{i_j}:= \rho^{\d(G) + 1}_1\rp'_{i_j}.$$   
	Then, the degree of each $\rp_{i_j}$ is at least $(\d(G) + 1)$. Moreover, by~\eqref{eq:withindelta}, we obtain that
	$$
	\left | \sum^{m_i}_{j = 1}u_{i_j}(t)\rp_{i_j}(\sigma)  - c_i(t,\sigma) \right |< \kappa \delta' = \delta, 
	\quad\forall (t,\sigma)\in [0,T]\times \Sigma \, \mbox{and} \, \forall i = 1,\ldots, \gamma.
	$$
	We have thus established~\eqref{eq:approximateci}.  \hfill{\qed}

\begin{remark}{\em  
	The arguments used in the proof of Prop.~\ref{prop:controllabilityoflieextended}, when combined with the averaging techniques established in~\cite{sussmann1993lie,liu1997approximation}, could be used to design an algorithm for generating a set of control inputs $u_{ij, s}[0,T]$ for steering the original ensemble formation system~\eqref{eq:rewrittensystem} to approximate a given trajectory $\hat X_\Sigma[0,T]$. In a nutshell, the algorithm is composed of two steps: {\em (i)} Identify the order~$k$ of Lie extension and computes the control inputs $u_{i, \rp}(t)$ for the Lie extended system so that for any $\sigma\in \Sigma$, the trajectory generated by~\eqref{eq:thelastsystem} is within a certain error tolerance of the desired trajectory $\hat X_\sigma[0,T]$; {\em (ii)} Apply the averaging technique~\cite{sussmann1993lie,liu1997approximation}, which takes as input the $k$th order Lie extended system (with the $u_{i, \rp}(t)$'s computed above) and yields an appropriate set of control inputs $u_{ij, s}(t)$ that steer~\eqref{eq:rewrittensystem} to approximate the trajectory generated by the Lie extended system. We defer the analysis to another occasion.            
	     }
\end{remark}

\section{Conclusions}

We investigated in the paper a continuum ensemble of multi-agent formation systems~\eqref{eq:startensemble} with their information flows described by a common digraph~$G$. Such an ensemble control model can be viewed as a prototype for the study of design and control of other general ensembles of networked control systems. We established in the paper a sufficient condition (Theorem~\ref{thm:main}) for the ensemble formation system~\eqref{eq:startensemble} to be approximate path-controllable, i.e., for every individual formation system to be simultaneously approximately path-controllable. In particular, the theorem related ensemble controllability to the (common) information flow topology of every individual formation system.  

To establish the controllability result, we investigated the stochastic Lie algebra~$\A$ and computed iterated matrix commutators of the primary matrices~$A_{ij}$ for $A_{ij}\in S_G$.    
We introduced the notion of semi-codistinguished set (Def.~\ref{def:codistinguished}) and showed that for  every strongly connected digraph $G$, there exists a set $S^*_G$ (defined in~\eqref{eq:defsG}) semi-codistinguished to~$S_G$ with respect to the adjoint representation of $\A$ on its commutator ideal~$\A^*$. Moreover, we showed that such a set $S^*_G$ can be generated by iterated matrix commutators of primary matrices of any given depth that is greater than or equal to the diameter of~$G$. The above analysis of stochastic Lie algebra was instrumental in evaluating iterated Lie brackets of control vector fields of the ensemble formation system. We verified in Section~\ref{sec:proofofmainresult} the ensemble version of the Lie algebraic rank condition. 

Future work may focus on extending the controllability result to the case where the information flow digraphs $G_\sigma(t)$, for $\sigma\in \Sigma$, are time-varying and heterogeneous.
We will also aim to address the observability of an ensemble formation system, which is about the ability of using a finite number of output measurements to estimate the state of every individual formation system in the ensemble.

\bibliographystyle{IEEEtran}
\bibliography{ensembleformationcontrol,ensemblecontrol,nsfcareer}

\section*{Appendix}  
We prove here Lemma~\ref{lem:perfectliealgebra}. 
Let~$U$ be an $N\times N$ nonsingular matrix such that $U \1 = e_N$. Define a subspace $\tilde \A^*$ of $\R^{N\times N}$ by $\tilde \A^*:= U\A^* U^{-1}$, i.e., 
$
\tilde \A^*:= \{UAU^{-1} \mid A\in \A^*\} 
$.     
It should be clear that $\tilde \A^*$ is isomorphic to $\A^*$ (as a matrix Lie algebra). We show below that $\tilde \A^*$ is perfect. Recall that $\A^*$ is defined by the conditions that $A \1 = 0$ and $\tr(A) = 0$ for any $A\in \A^*$. Translating these conditions to the set $\tilde \A^*$, we obtain that 
$$
\tilde \A^* = \{\tilde A \in \R^{N\times N} \mid \tilde A e_N = 0 \mbox{ and } \tr(\tilde A) = 0\}.
$$ 
Note, in particular, that the right column of $\tilde A'\in \tilde \A^*$ is zero.  
Decompose a matrix $\tilde A\in \tilde \A^*$ into $2\times 2$ blocks: 
$
\tilde A = 
[
\tilde A', 0; 
b^\top, 0	
]
$
where $\tilde A' \in \R^{(N - 1) \times (N - 1)}$ and $b\in \R^{N - 1}$. Since $\tr(\tilde A') = \tr(\tilde A) = 0$, the collection of all such submatrices $\tilde A'$ is then the special linear Lie algebra $\mathfrak{sl}_{N - 1}(\R)$, which is known to be simple (note that $N \ge 3$). 

It follows that the Lie algebra $\tilde \A^*$ can be expressed as a semidirect product $\tilde \A^* = \mathfrak{sl}_{N -1}(\R) \ltimes \R^{N-1}$. Specifically, if we represent a matrix $\tilde A \in \tilde \A^*$ by a pair $(A', b^\top)$, then the Lie bracket of $(\tilde A'_1, b_1^\top)$ and $(\tilde A'_2, b_2^\top)$ is given by
$$
[(\tilde A'_1, b_1^\top), (\tilde A'_2, b_2^\top)] = ([\tilde A'_1, \tilde A'_2], b_1^\top \tilde A'_2 - b_2^\top  \tilde A'_1 )
$$  
Recall that the standard representation of $\mathfrak{sl}_{N -1}(\R)$ on $\R^{N - 1}$ is given by $(\tilde A', b) \mapsto \tilde A' b $. The representation is known to be irreducible, i.e., there does not exist a nonzero, proper subspace $V\in \R^{N-1}$ such that $\sl_{N-1}(\R) V \subseteq V$. 
This, in particular, implies that $\mathfrak{sl}_{N-1}(\R) \R^{N - 1} =  \R^{N-1}$. It then follows that the Lie algebra $\tilde \A^* = \sl_{N-1}(\R)\ltimes \R^{N-1}$ is perfect. 

Let $\tilde \A^* = \tilde \A^*_\mathfrak{l} \oplus \tilde \A^*_\mathfrak{r}$ be the Levi decomposition of $\tilde \A^*$ where $\tilde \A^*_\mathfrak{l}$ is semi-simple and $\tilde \A^*_\mathfrak{r}$ is the radical of $\tilde \A^*$.   
It should be clear from the above arguments that  
$$
\begin{array}{l}
\tilde \A^*_\mathfrak{l} = \left \{
\begin{bmatrix}
\tilde A' & 0 \\
0 & 0 	
\end{bmatrix}
\mid \tr\left (\tilde A' \right ) = 0
\right \}\approx \sl_{N-1}(\R), \vspace{3pt}\\
\tilde \A^*_\mathfrak{r} = \left \{
\begin{bmatrix}
0 & 0 \\
b^\top & 0 	
\end{bmatrix}
\mid b\in \R^{N - 1}
\right \};
\end{array}
$$  
indeed, we have that $[\tilde \A^*_\mathfrak{r}, \tilde \A^*_\mathfrak{r}] = 0$. 
Correspondingly, $\A^*_\mathfrak{l}$ and $\A^*_\mathfrak{r}$ can be obtained by the similarity transformation: $\A^*_\mathfrak{l}= U^{-1} \tilde A^*_\mathfrak{l} U$ and $\A^*_\mathfrak{r}= U^{-1} \tilde A^*_\mathfrak{r} U$. By computation, we have  
$$
\A^*_\mathfrak{l} = \{A\in \A^* \mid A^\top \1 = 0\} \quad \mbox{and} \quad \A^*_\mathfrak{r} = \{\1 v^\top \mid v^\top \1 = 0\}. 
$$
Note that the above result does not depend on a particular choice of~$U$ as long as $U$ is nonsingular  and $U \1$ is linearly proportional to $e_N$.  \hfill{\qed}

\end{document}